\DeclareMathOperator{\Hom}{Hom}
\DeclareMathOperator{\Aut}{Aut}
\DeclareMathOperator{\Gal}{Gal}
\DeclareMathOperator{\Ind}{Ind}
\DeclareMathOperator{\Res}{Res}
\DeclareMathOperator{\Cl}{Cl}
\DeclareMathOperator{\id}{id}
\DeclareMathOperator{\Stab}{Stab}
\DeclareMathOperator{\Ker}{Ker}
\DeclareMathOperator{\im}{Im}
\DeclareMathOperator{\units}{units}
\DeclareMathOperator{\ram}{ram}
\DeclareMathOperator{\ur}{ur}
\DeclareMathOperator{\Sel}{Sel}
\DeclareMathOperator{\val}{val}
\definecolor{darkWhite}{rgb}{0.94,0.94,0.94}
\newcounter{introthm}
\newtheorem{introtheorem}[introthm]{Theorem}
\newtheorem{tm}{Theorem}[section]
\newtheorem{pro}[tm]{Proposition}
\newtheorem{lm}[tm]{Lemma}
\theoremstyle{definition}
\newtheorem{df}[tm]{Definition}
\newtheorem{req}[tm]{Remark}
\newtheorem{algo}[tm]{Algorithm}
\newcommand{\Q}{\mathbb{Q}}
\newcommand{\Z}{\mathbb{Z}}
\renewcommand{\O}{\mathcal{O}}
\newcommand{\HH}{\mathcal{H}}
\newcommand{\G}{\mathcal{G}}
\newcommand{\LL}{\mathcal{L}}
\newcommand{\K}{\overline{K}^\times}
\date{}
\author{Fabrice Etienne}
\title{An algorithm to compute Selmer groups via resolutions by permutations modules}
\begin{document}

\maketitle

\abstract{Given a number field with absolute Galois group $\G$, a finite Galois module $M$, and a Selmer system $\LL$, this article gives a method to compute $\Sel_\LL$, the Selmer group of $M$ attached to $\LL$. First we describe an algorithm to obtain a resolution of $M$ where the morphisms are given by Hecke operators. Then we construct another group $H_S^1(\G,M)$ and we prove, using the properties of Hecke operators, that $H_S^1(\G,M)$ is a Selmer group containing $\Sel_\LL$. Then, we discuss the time complexity of this method. }

\section*{Introduction}

Selmer groups, constructed from the Galois cohomology of number fields, are powerful tools in modern number theory. Introduced in the study of descent in elliptic curves (\cite[Chapter X, \textsection 4]{Silverman}), they have been crucial for progress toward the BSD conjecture (see for example \cite{Kolyvagin}) and arithmetic statistics on ranks of elliptic curves (see \cite{Bhargava}), conjecturally predict the order of vanishing of L-functions (see \cite{Bloch_Kato}), control deformations of Galois representations (see \cite[\textsection 1.10]{deforming_gal_rep}) and therefore play an important role in modularity theorems (see \cite{Wiles}) and have many other applications, for instance in effective class field theory (see \cite[\textsection 5.2.2]{Cohen}). It is therefore important to design efficient algorithms to compute Selmer groups.

Throughout the article, we will use the following definition of a Selmer group.

 Let $K$ be a number field, $\overline{K}$ its algebraic closure, and $\G$ (or $\G_K$) its absolute Galois group. Let $M$ be a left $\G$-module.
 Given a finite place $v$ of $K$, let $G_v$ denote the decomposition group of $K$ at $v$, and $I_v$ the inertia group. Then, 
 \begin{itemize}
     \item a \emph{local condition} at $v$ is a subgroup $L_v \subset H^1 (G_v, M)$,
     \item the \emph{unramified condition} is the subgroup $$H^1_{un}(G_v, M) = \ker \left\{ H^1(G_v, M) \rightarrow H^1(I_v, M) \right\},$$
     \item a \emph{Selmer system} for $M$ is a set $\LL$ of local conditions $L_v$ at every finite place $v$ of $K$, such that all but finitely many of the $L_v$ are the unramified condition,
     \item given a Selmer system $\LL$, the \emph{Selmer group} attached to $\LL$ is the subgroup of $H^1 ( \G_K, M)$ given by $$\Sel_\LL = \ker \left\{H^1 ( \G_K, M) \rightarrow \prod_{v}\frac{H^1(G_v, M)}{L_v} \right\}.$$
 \end{itemize}

 Note that this definition of Selmer group is restricted to subgroups of the first cohomology group $H^1 ( \G_K, M)$, but we can give a similar definition for Selmer groups that would be subgroups of other cohomology groups. For future research, it might be interesting to try and adapt the method of this article to be able to compute Selmer groups contained in $H^2 ( \G_K, M)$.

Some methods already exist to compute Selmer groups. For Selmer groups of elliptic curves, Bruin lists some of these algorithms in \cite[section 5.4]{Bruin} and gives a geometric interpretation, and we can also mention some more recent articles, like the article \cite{maistret_shukla} by Maistret and Shukla. The method presented here is more general, since it allows one to compute Selmer groups in general and not only Selmer groups of elliptic curves. For future work, we think it would be interesting to compare the time complexity of all the existing methods.

The main result in this article will be the following.
\begin{introtheorem} \label{tmA}
    Let $\G$ be the absolute Galois group of a number field $K$, and $M$ be a finite left $\G$-module. There exists an algorithm that on input
    \begin{itemize}
        \item the module $M$,
        \item the finite group $G$ that is the image of the action $\G \rightarrow \Aut(M)$,
        \item a Selmer system $\LL$,
    \end{itemize}
    outputs the Selmer group $\Sel_\LL$ attached to $\LL$ for $M$.
    Moreover, every step of this algorithm is polynomial, except for the computation of subfields of $\overline{K}$ fixed by subgroups of $\G$, and the computation of the group of $S$-units and the class group of some field extensions of $K$.
\end{introtheorem}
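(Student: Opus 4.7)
The strategy is to reduce the computation of $\Sel_\LL$ to a finite amount of class-field-theoretic data in two steps: first I would construct a resolution of $M$ by permutation $\G$-modules whose transition maps are Hecke operators, and then I would cut out a finite auxiliary group $H_S^1(\G,M)$ that contains $\Sel_\LL$ and is amenable to direct computation.

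For the first step I would build, inductively, a partial resolution
\[
P_2 \longrightarrow P_1 \longrightarrow P_0 \longrightarrow M \longrightarrow 0
\]
where each $P_i$ is a finitely generated permutation module of the form $\bigoplus_j \Z[\G/H_{i,j}]$ with each $H_{i,j}$ an open subgroup of finite index in $\G$, and each differential is a matrix of Hecke operators. The base step takes a finite set of $G$-module generators of $M$ and sets $P_0$ to be the corresponding sum of induced modules; since the kernel at each stage is again a finitely generated $\Z[G]$-module (as $G$ is finite), the construction iterates and one can stop at whatever depth is needed to control $H^1$.

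Next I would turn this resolution into a computation of $H^1(\G,M)$. By Shapiro's lemma, $H^i(\G,\Z[\G/H]) \cong H^i(H,\Z)$, so the low-degree cohomology of each $P_i$ is governed by the unit group and class group of the fixed field $\overline{K}^{H_{i,j}}$, and the long exact sequences attached to the resolution express $H^1(\G,M)$ as a subquotient of this explicit arithmetic data. The group $H_S^1(\G,M)$ is then obtained by imposing unramified conditions outside a finite set $S$ large enough to contain every place where $M$ is ramified and every place where $L_v$ differs from the unramified condition; this forces all computations to happen in terms of $S$-units and class groups of the intermediate fields $\overline{K}^{H_{i,j}}$, and yields a finite group with an explicit presentation.

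Finally I would prove the inclusion $\Sel_\LL \subseteq H_S^1(\G,M)$ using the compatibility between the Hecke-operator differentials and restriction to decomposition groups. Once this inclusion is in hand, computing $\Sel_\LL$ reduces to intersecting, for each $v \in S$, the kernel of the localisation map $H_S^1(\G,M) \to H^1(G_v,M)/L_v$, each of which is a morphism of explicitly presented finite abelian groups. The main obstacle I expect is exactly this last inclusion: it requires a careful analysis of how the Hecke operators produced by the resolution interact with the unramified conditions at every place, and it is the only non-formal ingredient. The remaining pieces (Shapiro's lemma, the long exact sequence, and the local cohomology of permutation modules) are standard, and the only super-polynomial subroutines that surface are the subfield and $S$-unit / class-group computations mentioned in the theorem.
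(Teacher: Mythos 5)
There is a genuine gap at the very first step, and it propagates through the rest of the plan. You resolve $M$ itself on the left by permutation modules, $P_2 \to P_1 \to P_0 \to M \to 0$, and then invoke Shapiro's lemma in the form $H^i(\G,\Z[\G/H]) \cong H^i(H,\Z)$ to claim that the cohomology of the $P_i$ is ``governed by the unit group and class group'' of $\overline{K}^{H}$. Neither half of this works: $H^1(H,\Z)=0$ and $H^2(H,\Z)\cong\Hom(H,\Q/\Z)$ have nothing to do with unit groups or class groups (those appear for the coefficient module $\overline{K}^\times$, not $\Z$), and a \emph{left} resolution by non-acyclic modules dimension-shifts in the wrong direction, so the long exact sequences express $H^1(\G,M)$ in terms of $H^2$ of the successive kernels rather than as a computable subquotient. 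The paper's construction is different in a way that is essential: it resolves the \emph{dual} module $M^*$ on the left by permutation modules, then applies the exact functor $\Hom_\Z(-,\overline{K}^\times)$ to obtain a \emph{right} resolution $0 \to M \to I_0 \to I_1 \to I_2$ in which $I_i = \bigoplus_j \overline{L_{i,j}}^{\times}$. These $I_i$ satisfy $H^1(\G,I_i)=0$ by Shapiro plus Hilbert 90, which is exactly what makes $H^1(\G,M) = \Ker(d_1^G)/\im(d_0^G)$ with $I_i^G = \bigoplus_j L_{i,j}^\times$; only then do $S$-units and class groups of the fields $L_{i,j}$ enter. Without this dualization your resolution does not produce the arithmetic data your second step relies on.

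Two further points are missing even granting the corrected resolution. First, your choice of $S$ (places where $M$ ramifies plus places where $L_v$ is not unramified) omits the requirement that the primes above $S$ generate the class group of the \'etale algebra $L_0$ and that $S$ contain the primes dividing $|M|$; the paper needs the class-group condition to prove that every class of $H^1$ that is unramified outside $S$ actually lies in the $S$-unit-defined group $H^1_S$ (Proposition \ref{pro:selmer1}), and without it the containment $\Sel_\LL \subseteq H^1_S$ can fail. Second, you never address why $H^1_S$ injects into $H^1$ in the first place; the paper proves this using the pseudo-inverse property of Hecke operators (Proposition \ref{pseudo_inv}) together with saturation of $S$-units, and this is a non-formal ingredient on the same footing as the local analysis you do flag as the main obstacle.
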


We will describe this algorithm (see algorithm \ref{alg:main}), and discuss the complexity in proposition \ref{pro_complexity}.

We will use some properties of Hecke operators of finite groups. In section \ref{Sec:Hecke}, we will give the definition of Hecke operators, as it was given for example in \cite{Yoshida}, and we will state some important properties that were proven in \cite{clgp} (propositions \ref{pseudo_inv} and \ref{im_S_unit}).

In section \ref{Sec:resolution}, we will discuss a method to find a partial resolution of any finite Galois module $M$, of the form $$ 0 \rightarrow M \rightarrow I_0 \xrightarrow{d_0} I_1 \xrightarrow{d_1} I_2 $$ where the modules $I_i$ are duals of sums of permutation modules, and the morphisms $d_i$ are given by sums of Hecke operators. A more precise algorithm will be given in section \ref{Sec:Algo} (algorithm \ref{alg_resolution}).

Then, in section \ref{Sec:Selm}, we will introduce some special groups $H^1_S(\G,M)$ where $S$ is a set of prime numbers (see definition \ref{df:H1S}), and we will show that under some conditions on $S$, the group $H^1_S$ is the Selmer group attached to the Selmer structure where all the conditions at places outside of $S$ are unramified conditions and where there is no condition for the places in $S$. 

Given a Selmer system $\LL$, our method to compute $\Sel_\LL$ will be to first compute $H^1_S$ for $S$ large enough, and then look for $\Sel_\LL$ as a subgroup of $H^1_S$. In section \ref{Sec:Algo}, we will describe this method (see algorithm \ref{alg:main}) and discuss its time complexity (see proposition \ref{pro_complexity}). We leave the implementation of this method for future work.

\paragraph*{Notations and conventions}

In all of the article, $K$ will be a field of characteristic zero. We will denote by $\overline{K}$ its algebraic closure and by $\G$ the Galois group $\Gal (\overline{K}/K)$.

All modules in this article will be left modules.

When $R$ is a ring and $M$, $N$ are left $R$-modules, we will denote~$\Hom_R(M,N)$ the group of $R$-module homomorphisms from $M$ to $N$.

If $M$ is a $\G$-module, we will denote by $M^* := \Hom_{\Z}(M, \K)$ the dual module of $M$, where $\K$ is viewed as an abelian group.

In a finite field extension $L/F$, we will denote by $N_{L/F}(x)$ the norm of $x \in L$.

Unless specified otherwise, the group laws of cohomology groups will always be denoted multiplicatively. 

\paragraph*{Acknowledgements} I would like to thank A. Page for the the initial idea behind this article and for his precious help and advice.\\
This research was funded by the University of Bordeaux. It was also supported by the CIAO ANR (ANR-19-CE48-008) and the CHARM ANR (ANR-21-CE94-0003). It took place inside the CANARI team (Cryptographic Analysis and Arithmetic) of the Institute of Mathematics of Bordeaux (IMB).\\

\section{Properties of Hecke operators} \label{Sec:Hecke}

The method will use some properties of Hecke operators, which were proven in \cite{clgp}.
In this section, $G$ will denote a finite subgroup of $\G = \Gal(\overline{K}/K)$.

\begin{df}
    Let $H$ be a subgroup of $G$, then the set $\Z[G/H]$ has a structure of $\Z[G]$-module. We call \emph{permutation modules} the sums of modules of this form.
\end{df}

\begin{df} \label{def_coset_Hecke}
    If $R$ is a ring and $V$ a $R[G]$-module, and $H,J$ two subgroups of $G$, then there is a morphism of $R$-modules $$R[H \backslash G / J] \rightarrow \Hom_{R}(V^H, V^J).$$
    This isomorphism is described in \cite[Fact 1.4]{clgp}. The morphisms of $R$-modules associated with cosets of the form $HgJ$ for $g$ in $G$ by this isomorphism are called \emph{Hecke operators}.
\end{df}

\begin{pro} \label{pseudo_inv}
    Let $H_1, \cdots, H_n$ and $J_1, \cdots, J_m$ be subgroups of $G$.
    If $\Phi \colon \bigoplus_i \Z[G/H_i] \rightarrow \bigoplus_i \Z[G/J_i]$ is a morphism of $\Z[G]$-modules whose image is of finite index in $\bigoplus_i \Z[G/J_i]$, then there exists an injective morphism $\Psi \colon \bigoplus_i \Z[G/J_i] \rightarrow \bigoplus_i \Z[G/H_i]$ such that $\Phi \circ \Psi = k \cdot \id$, where $k$ is a positive integer that divides $|G|^2$.
\end{pro}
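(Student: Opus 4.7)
The strategy is to construct $\Psi$ by first splitting $\Phi$ rationally and then clearing denominators cohomologically. Write $M = \bigoplus_i \Z[G/H_i]$ and $N = \bigoplus_i \Z[G/J_i]$. The hypothesis that $\im \Phi$ has finite index in $N$ implies that $\Phi_\Q : M_\Q \to N_\Q$ is surjective, and since $\Q[G]$ is semisimple by Maschke's theorem, this surjection admits a $\Q[G]$-linear section $s : N_\Q \to M_\Q$. The section $s$ is automatically injective; taking $\Psi = k \cdot s$ for the smallest positive integer $k$ with $k \cdot s(N) \subseteq M$ would give the desired injection with $\Phi \circ \Psi = k \cdot \id_N$.

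The remaining task is to show $k \mid |G|^2$. The key tool is Shapiro's lemma, $\mathrm{Ext}^i_{\Z[G]}(\Z[G/H], A) \cong H^i(H, A)$, combined with the fact that the higher cohomology of a finite group is killed by the group order: together they give $|G| \cdot \mathrm{Ext}^i_{\Z[G]}(N, -) = 0$ for all $i \geq 1$. I would factor $\Phi$ as $M \twoheadrightarrow \im \Phi \hookrightarrow N$ and decompose the lifting problem for $\id_N$ through $\Phi$ into two pieces: (i) the exponent of $N / \im \Phi$, which measures how much one must multiply $\id_N$ to push it into $\im \Phi$, and (ii) the obstruction in $\mathrm{Ext}^1_{\Z[G]}(N, \ker \Phi)$ to lifting the resulting map through the surjection $M \twoheadrightarrow \im \Phi$. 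Obstruction (ii) is killed by $|G|$ by Shapiro, contributing one factor of $|G|$ to $k$.

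The main obstacle is showing that $|G|$ also kills $N / \im \Phi$, which would supply the second factor of $|G|$ and complete the bound. This does not hold for arbitrary $\Z[G]$-module morphisms, so the argument must exploit the specific structure of permutation modules and Hecke operators. My plan here is to use the self-duality $\Z[G/H]^* \cong \Z[G/H]$ coming from the standard coset pairing in order to transpose $\Phi$ to a map $\Phi^\vee : N \to M$, which is automatically injective once $\Phi_\Q$ is surjective. Then $\Phi \circ \Phi^\vee$ is a $\Z[G]$-endomorphism of $N$, explicitly computable via the Hecke operator description of Definition~\ref{def_coset_Hecke}; showing that its image contains $|G| \cdot N$ (perhaps by another appeal to cohomology, or by direct analysis of composed Hecke operators $T_{H_i g J_j} \circ T_{J_j g^{-1} H_i}$ on each summand) would close the argument. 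I expect this last computation, which is specific to the permutation module setting and is where the integrality beats the pathologies of arbitrary $\Z[G]$-modules, to be the delicate part of the proof.
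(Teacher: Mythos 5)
The paper does not actually prove this proposition---it only cites \cite{clgp} (Proposition 2.13 for the existence of $\Psi$, Theorem 2.16 for the divisibility $k \mid |G|^2$)---so yours is necessarily an independent argument. Two of your three ingredients are sound: Maschke's theorem does give a rational section, hence the existence of some $\Psi$ with $\Phi \circ \Psi = k \cdot \id$ for some $k > 0$; and your factorisation of the lifting problem for $\id_N$ through $M \twoheadrightarrow \im \Phi \hookrightarrow N$ into (i) the exponent of $N/\im\Phi$ and (ii) an obstruction class in $\mathrm{Ext}^1_{\Z[G]}(N, \ker\Phi)$ is the right bookkeeping, with part (ii) correctly killed by $|G|$ via Shapiro's lemma.

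The gap is exactly where you place it, and it cannot be closed. Take $\Phi = p \cdot \id$ on $\Z[G/H]$ for a prime $p \nmid |G|$: this is a $\Z[G]$-module morphism (it is $p$ times the Hecke operator of the trivial double coset $H1H$) whose image $p\,\Z[G/H]$ has finite index $p^{[G:H]}$, yet $N/\im\Phi \cong (\Z/p\Z)[G/H]$ is not killed by $|G|$, and any $\Psi$ with $\Phi \circ \Psi = k\cdot\id$ forces $p \mid k$, so $k \nmid |G|^2$. Your proposed rescue via the self-dual transpose fails on the same example, since $\Phi \circ \Phi^{\vee} = p^{2}\cdot\id$ and its image certainly does not contain $|G|\cdot N$; no analysis of composed Hecke operators can manufacture the missing factor of $|G|$. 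The conclusion is that the statement with only the finite-index hypothesis is not provable, and the cited source must carry an additional hypothesis (for instance surjectivity of $\Phi$, under which your $\mathrm{Ext}$ step alone already yields $k \mid \mathrm{lcm}_i\,|J_i| \mid |G|$, or a structural condition on how $\Phi$ arises) that the statement here has dropped. Note that this is not a vacuous worry even for the paper's own use of the proposition, where $\Phi = d_0^{*}$ has cokernel $M^{*}$, whose exponent need not divide $|G|$. To make your argument complete you should either import the precise hypotheses of the source, or state your conclusion honestly as $k = |G| \cdot \exp(N/\im\Phi)$ and track that extra factor through the application.
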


\begin{proof}
    See \cite[Proposition 2.13]{clgp} for the existence of $\Psi$, and \cite[Theorem 2.16]{clgp} for the proof that $k$ divides $|G|^2$.
\end{proof}

\begin{pro} \label{im_S_unit}
    Let $H,J$ be two subgroups of $G$, $L_1 = \overline{K}^H$ and $L_2 = \overline{K}^J$. Let $S$ be a set of prime numbers. If $\Phi: L_1^\times \rightarrow L_2^\times$ is defined by a sum of Hecke operators, then $$\Phi( \Z_{S, L_1}^\times) \subset \Z_{S, L_2}^\times. $$
\end{pro}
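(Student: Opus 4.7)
The plan is to exploit the explicit description of a Hecke operator as a sum (in the multiplicative setting of $\overline{K}^\times$, really a product) of Galois translates. Recall from \cite[Fact 1.4]{clgp} that the Hecke operator attached to a double coset $HgJ$ can be written in the form $x \mapsto \prod_i \sigma_i(x)$, where the $\sigma_i \in \G$ are obtained from a decomposition of $HgJ$ into cosets. A sum of Hecke operators then still has the shape $\Phi(x) = \prod_i \sigma_i(x)$ for some finite indexed family of elements $\sigma_i \in \G$ (with repetitions allowed), so it suffices to check the conclusion for a single such product.

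Fix $x \in \Z_{S,L_1}^\times$ and let $w$ be a finite place of $L_2$ not lying above any prime in $S$; I want to show $w(\Phi(x)) = 0$. Extend $w$ to a place $\tilde{w}$ of $\overline{K}$. Using the standard identity $\tilde{w}(\sigma(y)) = (\sigma^{-1}\tilde{w})(y)$ for $\sigma \in \G$, and converting the product $\Phi(x)$ to a sum of valuations, I obtain
$$w(\Phi(x)) \;=\; \sum_i \tilde{w}\bigl(\sigma_i(x)\bigr) \;=\; \sum_i v_i(x),$$
where $v_i := (\sigma_i^{-1}\tilde{w})\big|_{L_1}$ is a finite place of $L_1$.

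To finish, observe that each $\sigma_i$ acts trivially on $\Q$, so $\sigma_i^{-1}\tilde{w}$ lies above the same rational prime $p$ as $\tilde{w}$, and consequently $v_i$ lies above $p$ as well. Since $w$ is not above any prime of $S$, neither is $p$, and hence $v_i$ is a finite place of $L_1$ outside $S$. By assumption $x$ is an $S$-unit of $L_1$, so $v_i(x) = 0$ for every $i$, giving $w(\Phi(x)) = 0$, as desired.

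The only delicate point is bookkeeping between the multiplicative structure on $\overline{K}^\times$ and the additive structure on valuations, together with making sure the product $\prod_i \sigma_i(x)$ really lands in $L_2^\times = (\overline{K}^\times)^J$; but this last point is automatic from the construction of the Hecke operator, since $HgJ$ is stable under right multiplication by $J$, so any $j \in J$ merely permutes the chosen coset representatives. No archimedean issues arise because the definition of $\Z_{S,L}^\times$ concerns only the finite places.
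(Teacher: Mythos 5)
Your argument is correct, but it is genuinely different in character from what the paper does: the paper's entire proof is a one-line citation to \cite[Theorem 1.18]{clgp}, whereas you give a self-contained valuation-theoretic argument. Your route — write the Hecke operator attached to $HgJ$ as a product of Galois translates $x \mapsto \prod_i \sigma_i(x)$, push a place $w$ of $L_2$ outside $S$ up to $\overline{K}$, and use $\tilde{w}(\sigma_i(x)) = (\sigma_i^{-1}\tilde{w})(x)$ together with the fact that $\G$ fixes $\Q$, so that each $\sigma_i^{-1}\tilde{w}$ still lies over a rational prime outside $S$ — is exactly the right elementary mechanism, and it has the advantage of making the proposition independent of the cited reference (the cited theorem is in any case about the same phenomenon, the compatibility of Hecke operators with $S$-unit groups, and moreover carries effectivity information that the paper reuses later for the complexity analysis). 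Two small points of bookkeeping you should make explicit: first, an element of $\Z[H\backslash G/J]$ is a $\Z$-linear combination, so the coefficients $n_i$ may be negative and $\Phi(x) = \prod_i \sigma_i(x)^{n_i}$; this is harmless since $w\bigl(\Phi(x)\bigr) = \sum_i n_i\, v_i(x)$ still vanishes, but as written your ``repetitions allowed'' phrasing only covers nonnegative coefficients. Second, the restriction of $\sigma_i^{-1}\tilde{w}$ to $L_1$ is only proportional to a normalized place $v_i$ of $L_1$ (the value group of $\tilde{w}$ on $\overline{K}$ is divisible), but since you only need ``$=0$ if and only if $=0$'' this does not affect the conclusion. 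With those caveats noted, the proof is complete.
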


\begin{proof}

This is a direct consequence of \cite[Theorem 1.18]{clgp}.

\end{proof}

\section{Finding a resolution with Hecke operators} \label{Sec:resolution}

In all of the article, $M$ will be a finite Galois module.

Let $G$ be the image of the action $\G \rightarrow \Aut(M)$. It is isomorphic to a finite quotient of $\G$.
Note that the action of $\G$ over $M$ can be factorized to be seen as an action of $G$ over $M$.

\begin{req} \label{req:G_galois}
    If $\mathcal{N}$ denotes the kernel of the action $\G \rightarrow \Aut(M)$, then $G$ is the Galois group of the Galois extension $\overline{K}^{\mathcal{N}}/K$.
\end{req}

Suppose we have $\Z[G]$-modules $P_i$ for every integer $i$, which are permutation modules, as well as some morphisms of $G$-modules $s$ and  $d_i^*$ such that the sequence $$\cdots \xrightarrow{d_2^*} P_2 \xrightarrow{d_1^*} P_1 \xrightarrow{d_0^*} P_0 \xrightarrow{s} M^* \rightarrow 0 $$ is exact, where $M^*$ is the dual module of $M$.

We will see in section \ref{Sec:Algo} that we can always find such an exact sequence, and we will give an algorithm (algorithm \ref{alg_resolution}) to compute such $P_i$ and $d_i^*$ up to any integer $i$.

In this article, we will only need to compute such sequences up to~$P_2$. We will denote by (\ref{suite_ex_1}) an exact sequence of the form \begin{equation} \label{suite_ex_1}
     P_2 \xrightarrow{d_1^*} P_1 \xrightarrow{d_0^*} P_0 \xrightarrow{s} M^* \rightarrow 0. \end{equation}

\begin{lm} \label{lm_ex_fct}
    The functor $P \mapsto P^* = \Hom_\Z(P, \K)$, on the category of $\G$-modules that are finitely generated $\Z$-modules, is exact.
\end{lm}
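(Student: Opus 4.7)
The plan is to reduce the claim to the statement that $\K$ is an injective $\Z$-module, and then observe that exactness of $\Hom_\Z(-, \K)$ as a functor of $\Z$-modules automatically yields exactness as a functor of $\G$-modules.

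First I would show that $\K$ is a divisible abelian group. Given $a \in \K$ and a positive integer $n$, the polynomial $X^n - a \in \overline{K}[X]$ is nonzero (here one uses $a \neq 0$) and therefore splits in $\overline{K}$, so $a$ admits an $n$-th root in $\K$. Writing the group $\K$ additively, this means multiplication by $n$ is surjective for every positive integer $n$, which is exactly divisibility.

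Next I would invoke the standard fact that over a principal ideal domain, and in particular over $\Z$, a module is injective if and only if it is divisible (a short proof goes through Baer's criterion). Consequently $\K$ is an injective $\Z$-module, and so the contravariant functor $\Hom_\Z(-, \K)$ is exact on the category of $\Z$-modules; note that no finite-generation hypothesis is actually needed for this abstract step.

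Finally I would transfer this exactness to $\G$-modules. For a short exact sequence $0 \to A \to B \to C \to 0$ of $\G$-modules, the underlying sequence of $\Z$-modules is exact, so applying $\Hom_\Z(-, \K)$ yields a short exact sequence of abelian groups. Equipping each $\Hom_\Z(-, \K)$ with its natural $\G$-action, $(\sigma \cdot f)(x) = \sigma \cdot f(\sigma^{-1} x)$, makes the induced maps $\G$-equivariant, and since exactness of a sequence of $\G$-modules is tested on the underlying abelian groups, the dual sequence is exact in the category of $\G$-modules as well. There is essentially no obstacle here: the only input beyond formalism is the divisibility of $\K$, which is immediate from $\overline{K}$ being algebraically closed.
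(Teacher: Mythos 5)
Your proof is correct, and it takes a genuinely different (and cleaner) route than the paper's. The paper works element-by-element on a short exact sequence $0 \to A \xrightarrow{i} B \xrightarrow{s} C \to 0$: it gets injectivity of $s^*$ from surjectivity of $s$, asserts surjectivity of $i^*$ ``since $i$ is injective'' --- which is precisely the step that silently uses the injectivity of $\K$ as a $\Z$-module --- and proves middle exactness by factoring a map killed on $\im(i)$ through $B/\Ker(s) \cong C$, invoking the structure theorem for finitely generated abelian groups. You instead isolate the one nontrivial input: $\K = \overline{K}^\times$ is divisible because every $X^n - a$ with $a \neq 0$ splits in $\overline{K}$, hence $\K$ is an injective $\Z$-module by Baer's criterion, hence $\Hom_\Z(-,\K)$ is exact on all $\Z$-modules; the passage to $\G$-modules is then formal since exactness is tested on underlying abelian groups and the dual maps are $\G$-equivariant for the contragredient action. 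What your approach buys is rigour exactly where the paper is vague (extending a homomorphism $A \to \K$ along $A \hookrightarrow B$ genuinely requires the target to be divisible, not merely $i$ to be injective), and it shows the finite-generation hypothesis in the lemma is unnecessary. What the paper's approach buys is self-containedness at the level of explicit elements, which matches the computational spirit of the rest of the article, but as written it leaves the key surjectivity step unjustified.
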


\begin{proof}

Let $A, B, C$ be $G$-modules such that there is an exact sequence $$ 0 \rightarrow A \xrightarrow{i} B \xrightarrow{s} C \rightarrow 0.$$

Let $s^*$ be the map $C^* = \Hom(C, \K) \rightarrow \Hom(B, \K) = B^*; f \mapsto f \circ s$. Since $s$ is surjective, the map $s^*$ is injective.

Likewise, let $i^*$ be the map $B^* \rightarrow A^*, f \mapsto f \circ i$. Again, since $i$ is injective, the map $i^*$ is surjective.

Since we know that $i \circ s = 0$, we have $\im(s^*) \subseteq \Ker(i^*)$ . We also need to show that $\Ker(i^*) \subseteq \im(s^*)$.

Let $f$ be an element of $\Ker(i^*)$. That is to say $f \circ i = 1$. Then that means $\Ker(f) \subset \im(i) = \Ker(s)$. So, by the structure of finitely generated abelian groups, there exists $g \in C^*$ such that $f = g \circ s$.

In conclusion, the short sequence $$0 \rightarrow C^* \xrightarrow{s^*} B^* \xrightarrow{i^*} A^*\rightarrow 0$$ is exact.

\end{proof}

Once we obtain an exact sequence of the form (\ref{suite_ex_1}), by lemma ~\ref{lm_ex_fct}, we can take the dual and get an exact sequence of the form \begin{equation} \label{suite_ex_2}
     0 \rightarrow M \rightarrow I_0 \xrightarrow{d_0} I_1 \xrightarrow{d_1} I_2 \end{equation}

where $I_i = P_i^*$ for all $i$.

Consider an exact sequence of the form (\ref{suite_ex_2}) obtained with the construction described above. The modules $P_0$, $P_1$, $P_2$ are permutations modules. In the rest of the section, let us denote $P_i = \bigoplus_j \Z[G/H_{i,j}]$ for $i \in \{1, 2, 3\}$. And for every pair $(i,j)$, let us define $L_{i,j} := \overline{K}^{H_{i,j}}$.

\begin{pro} \label{pro:alg_étales}
    With the above notations, for $i \in  \{1, 2, 3\}$, we have $$I_i = \bigoplus_j \Ind_{\G / G_{L_{i,j}}} \overline{K}^\times = \bigoplus_j \overline{L_{i,j}}^\times$$
    where $G_{L_{i,j}}$ is the absolute Galois group of $L_{i,j}$.
    And $$I_i^G = \bigoplus_j L_{i,j}^\times.$$
\end{pro}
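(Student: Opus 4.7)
The plan is as follows. Since $\Hom_\Z(-, \K)$ takes finite direct sums to direct sums, I would immediately reduce to the case of a single summand $P = \Z[G/H]$, where $H \subset G$ and $L = \overline{K}^H$. The goal then splits into establishing
\[
 \Hom_\Z(\Z[G/H], \K) \;\cong\; \Ind_{G_L}^{\G} \K \;\cong\; \overline{L}^\times
\]
as $\G$-modules, together with computing the $\G$-invariants.

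The key ingredient is a Galois-theoretic identification of $G/H$ with $\G/G_L$ as $\G$-sets (the $\G$-action on $G/H$ being via the projection $\G \twoheadrightarrow G$, and on $\G/G_L$ by left translation). This is the classical Galois correspondence: cosets $gH \in G/H$ parametrize the $K$-embeddings $L \hookrightarrow \overline{K}$, and these are in bijection with $\G/G_L$. Dualizing yields an isomorphism of $\Hom_\Z(\Z[G/H], \K)$ with the $\G$-module of set-theoretic maps $\G/G_L \to \K$ equipped with the diagonal $\G$-action; since $G_L$ has finite index in $\G$, this coincides with $\Ind_{G_L}^{\G} \K$.

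For the identification with $\overline{L}^\times$—read, in line with the proposition's label on étale algebras, as the unit group of $L \otimes_K \overline{K}$—I would invoke the standard $\G$-equivariant isomorphism of $\overline{K}$-algebras
\[
L \otimes_K \overline{K} \;\xrightarrow{\ \sim\ }\; \prod_{\sigma \in \Hom_K(L,\overline{K})} \overline{K}, \qquad a \otimes x \longmapsto (\sigma(a)\, x)_\sigma,
\]
and then pass to unit groups. On the right, $\G$ permutes the factors via its natural action on $\Hom_K(L, \overline{K})$ and acts on each factor $\overline{K}$ through the usual action, matching the $\G$-structure on the induced module from the previous step.

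Finally, for the $\G$-invariants: either Shapiro's lemma or a direct computation using $(L \otimes_K \overline{K})^{\G} = L \otimes_K \overline{K}^{\G} = L$ yields $(\overline{L}^\times)^{\G} = L^\times$, giving the second identity. I do not anticipate a serious obstacle: the whole argument is a chain of natural identifications, and the only care required is in matching the various $\G$-actions across them (the action on $P$ factors through the finite quotient $G$, whereas the action on $\K$ does not).
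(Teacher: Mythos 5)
Your argument is correct, and it is the standard proof of the fact in question: the paper itself gives no argument here, only a citation to Voskresenskii (Section 3.12, Example 19), and your chain of identifications --- reduction to one summand $\Z[G/H]$, the Galois-theoretic bijection $G/H \cong \G/G_L$, dualization to $\Ind_{G_L}^{\G}\overline{K}^\times$, the splitting $L\otimes_K\overline{K}\cong\prod_{\sigma}\overline{K}$ with the permute-and-act Galois structure, and Shapiro (or flatness of $L$ over $K$) for the invariants --- is exactly what that reference supplies. You also resolve the one genuinely delicate point correctly, namely reading $\overline{L_{i,j}}^\times$ as $(L_{i,j}\otimes_K\overline{K})^\times$ rather than literally as $\overline{K}^\times$, which is the interpretation consistent with the paper's later use of $I_i^{\Gal(\overline{F}/F)}=(L_i\otimes_K F)^\times$.
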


\begin{proof}
    See \cite[Section 3.12, Example 19]{Voskresenskii}.
\end{proof}

The morphisms $d_0 \colon I_0 \rightarrow I_1$ and $d_1 \colon I_1 \rightarrow I_2$ induce some morphisms respectively from $I_0^G$ to $I_1^G$ and from $I_1^G$ to $I_2^G$, that we will denote $d_0^G$ and $d_1^G$.

\begin{pro} \label{coh_grp}
    With the above notations, we have $$H^1(\G,M) = \frac{\Ker (d_1^G \colon I_1^G \rightarrow I_2^G ) }{\im (d_0^G \colon I_0^G \rightarrow I_1^G)}.$$
\end{pro}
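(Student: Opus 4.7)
The plan is to apply Hilbert~90 to each $I_i$ and then reduce the four-term exact sequence to a short exact sequence so that the desired identification drops out of the associated long exact sequence in cohomology.

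First I would verify the vanishing $H^1(\G, I_i) = 0$ for $i \in \{0,1,2\}$. By proposition \ref{pro:alg_étales} each $I_i$ is a finite direct sum of modules of the form $\Ind_{\G_{L_{i,j}}}^{\G} \overline{K}^\times$. Shapiro's lemma identifies $H^1\bigl(\G, \Ind_{\G_{L_{i,j}}}^{\G} \overline{K}^\times\bigr)$ with $H^1(\G_{L_{i,j}}, \overline{K}^\times)$, and since $\overline{K}$ is also the algebraic closure of $L_{i,j}$, Hilbert~90 makes this group trivial. The same identifications also give $I_i^G = \bigoplus_j L_{i,j}^\times$ as stated in proposition \ref{pro:alg_étales}.

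Next I would break the exact sequence (\ref{suite_ex_2}) at $I_1$ by introducing the intermediate module $N := \im(d_0) = \ker(d_1)$, so as to obtain the short exact sequence
\begin{equation*}
0 \rightarrow M \rightarrow I_0 \rightarrow N \rightarrow 0
\end{equation*}
together with the inclusion $N \hookrightarrow I_1$. Taking $\G$-invariants of the short exact sequence yields the long exact sequence
\begin{equation*}
0 \rightarrow M^G \rightarrow I_0^G \rightarrow N^G \rightarrow H^1(\G, M) \rightarrow H^1(\G, I_0) = 0,
\end{equation*}
from which $H^1(\G, M) \cong N^G / \im(I_0^G \rightarrow N^G)$.

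It then remains to identify $N^G$ with $\Ker(d_1^G)$ and the image of $I_0^G$ in $N^G$ with $\im(d_0^G)$. The first identification is immediate from $N = \ker(d_1)$ as a submodule of $I_1$: an element $x \in I_1^G$ lies in $N$ iff $d_1(x) = 0$, so $N^G = I_1^G \cap \ker(d_1) = \Ker(d_1^G)$. The second is the tautology that the map $I_0^G \to N^G$ induced by $d_0$ has the same image as $d_0^G \colon I_0^G \to I_1^G$, since $d_0$ factors through the inclusion $N \hookrightarrow I_1$. Combining these gives $H^1(\G, M) = \Ker(d_1^G)/\im(d_0^G)$, as required. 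I expect no substantive obstacle here; the only point needing care is invoking Shapiro plus Hilbert~90 correctly to justify $H^1(\G, I_0) = 0$, which is the pivotal ingredient.
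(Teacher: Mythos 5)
Your proof is correct and follows essentially the same route as the paper's: both split the four-term sequence at $\im(d_0)=\ker(d_1)$, apply the long exact sequence to $0\to M\to I_0\to N\to 0$, and kill $H^1(\G,I_0)$ via Shapiro's lemma plus Hilbert~90. The only cosmetic difference is that you identify $N^G$ with $\Ker(d_1^G)$ directly from $N=\ker(d_1)$, whereas the paper takes invariants of $0\to J\to I_1\to I_2$; these are the same observation.
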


\begin{proof}

Let $J \subset I_1$ be the image of $d_0$. Then we have a short exact sequence $$ 0 \rightarrow M \rightarrow I_0 \xrightarrow{d_0} J \rightarrow 0. $$

The associated long exact sequence starts with $$ 0 \rightarrow M^\G \rightarrow I_0^\G \xrightarrow{d_0} J \rightarrow H^1(\G, M) \rightarrow H^1(\G, I_0)  $$
and $H^1(\G, I_0) =  \bigoplus_j H^1(G_{L_{0,j}}, \overline{L}_{0,j}^\times)$ by Shapiro's lemma, and  $  H^1(G_{L_{0,j}},\overline{L}_{0,j}^\times) = 0 $ by Hilbert 90th theorem.

This last exact sequence allows us to deduce that $$H^1(\G, M) = \frac{J^G}{\im (d_0^G \colon I_0^G \rightarrow I_1^G)}. ~~ (*)$$

What's more, by definition of $J$, we also have an exact sequence $$0 \rightarrow J \rightarrow I_1 \xrightarrow{d_1} I_2.$$

Hence the exact sequence $$0 \rightarrow J^G \rightarrow I_1^G \xrightarrow{d_1} I_2^G$$
from which we can deduce that $$J^G = \Ker (d_1^G \colon I_1^G \rightarrow I_2^G ).$$ 

Combining that result with $(*)$, we get $$H^1(\G,M) = \frac{\Ker (d_1^G \colon I_1^G \rightarrow I_2^G ) }{\im (d_0^G \colon I_0^G \rightarrow I_1^G)}.$$

\end{proof}

\begin{pro} \label{pro:res_compatible}
    For every subgroup $\HH < \G_K$, the map $$\Res: H^1(\G_K, M) \rightarrow H^1(\HH, M)$$ is the natural restriction $$H^1(\G, M) = \frac{\Ker (d_1^\G \colon I_1^\G \rightarrow I_2^\G ) }{\im (d_0^\G \colon I_0^\G \rightarrow I_1^\G)} \rightarrow H^1(\HH, M) = \frac{\Ker (d_1^\HH \colon I_1^\HH \rightarrow I_2^\HH ) }{\im (d_0^\HH \colon I_0^\HH \rightarrow I_1^\HH)}$$.
\end{pro}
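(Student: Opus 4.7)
The plan is to prove this by unwinding the construction of the identification used in proposition \ref{coh_grp} and showing that it is natural with respect to replacing $\G$ by a subgroup $\HH$. Concretely, the identification $H^1(\G, M) \cong \Ker(d_1^\G)/\im(d_0^\G)$ is built out of connecting homomorphisms in the long exact sequences of two short exact sequences of $\G$-modules; these same short exact sequences, viewed as sequences of $\HH$-modules by restriction of scalars, produce the corresponding identification for $H^1(\HH, M)$. Restriction in group cohomology is precisely the natural transformation coming from this forgetful functor, so the whole picture commutes.

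First, I would fix $J = \im(d_0) \subset I_1$, which also equals $\Ker(d_1)$ by exactness. The isomorphism used in the proof of proposition \ref{coh_grp} is the composition of the connecting map $\delta_\G: J^\G \twoheadrightarrow H^1(\G, M)$ coming from $0 \to M \to I_0 \to J \to 0$ (whose kernel is $\im(d_0^\G)$, and which is surjective since $H^1(\G, I_0)$ vanishes by Shapiro and Hilbert 90), with the identification $J^\G = \Ker(d_1^\G)$ coming from the left-exact functor of $\G$-invariants applied to $0 \to J \to I_1 \xrightarrow{d_1} I_2$. The same recipe applied to $\HH$ in place of $\G$ gives the corresponding description of $H^1(\HH, M)$, and the inclusions $J^\G \hookrightarrow J^\HH$, $I_i^\G \hookrightarrow I_i^\HH$ induce the natural map between the two quotient descriptions.

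It then remains to check that the connecting homomorphism $\delta$ is natural with respect to restriction, which is the only real content. Given $j \in J^\G \subset J^\HH$, one picks a lift $x \in I_0$ with $d_0(x) = j$; then the cocycle $c_\G: g \mapsto g \cdot x - x$ takes values in $M = \Ker(d_0)$ and represents $\delta_\G(j) \in H^1(\G, M)$. The very same lift $x$ may be used to compute $\delta_\HH(j)$, which is then represented by the cocycle $c_\HH = c_\G|_\HH$. Since the restriction map in group cohomology is by definition induced on representative cocycles by restriction from $\G$ to $\HH$, we obtain $\Res([\delta_\G(j)]) = [\delta_\HH(j)]$, which is exactly the commutativity asserted in the proposition.

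The main obstacle, if any, is mostly bookkeeping: making sure that the surjectivity of $\delta_\G$ used in proposition \ref{coh_grp} (which relied on $H^1(\G, I_0) = 0$) is matched by the corresponding surjectivity of $\delta_\HH$, so that both sides are honest quotients. This is not a problem because Shapiro's lemma gives $H^1(\HH, I_0) = \bigoplus_j H^1(\HH \cap G_{L_{0,j}}, \overline{L}_{0,j}^\times)$, which again vanishes by Hilbert 90, so the same argument goes through for $\HH$ and the proof concludes.
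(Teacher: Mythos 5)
Your proof is correct and follows essentially the same route as the paper: both factor the identification through $J = \im(d_0)$, use the two short exact sequences, and invoke naturality of the long exact sequence (equivalently, of the connecting map) under restriction, with vanishing of $H^1(\HH, I_0)$ by Shapiro and Hilbert 90. One small imprecision: for a subgroup $\HH = \G_F$, Shapiro's lemma applied to the restricted induced module requires the Mackey double-coset decomposition, so $H^1(\HH, I_0)$ is a sum over double cosets of $H^1$ of conjugate stabilizers (equivalently, $H^1$ of the \'etale algebra $L_0 \otimes_K F$), not literally $\bigoplus_j H^1(\HH \cap G_{L_{0,j}}, \overline{L}_{0,j}^\times)$ --- but each summand still vanishes by Hilbert 90, so the conclusion stands.
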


\begin{proof}

Let $J \subset I_1$ be the image  of $d_0$. Then we have a short exact sequence $$ 0 \rightarrow M \rightarrow I_0 \xrightarrow{d_0} J \rightarrow 0. $$

The associated long exact sequence starts with $$0 \rightarrow M^\G \rightarrow I_0^\G \rightarrow J_1^\G \rightarrow H^1(\G, M).$$

We can then apply the restriction map to obtain 

\[ 
\xymatrix@C4.5pc{
0 \ar[r] & M^\G \ar[r] \ar[d]^{\Res} &  I_0^\G \ar[r] \ar[d]^{\Res} & J_1^\G \ar[r] \ar[d]^{\Res} & H^1(\G, M) \ar[d]^{\Res} \\
0 \ar[r] & M^{\HH} \ar[r] & I_0^{\HH} \ar[r] & J_1^{\HH} \ar[r] & H^1({\HH}, M)
}
\] 

Moreover, for every field $F$ such that $K \subset F$, and for all $i$, we have $I_i ^\G = L_i^\times$ and $I_i^{\Gal(\overline{F}/F)} = (L_i \otimes_K F)^\times$. So $I_i^{\HH} = \overline{L_i}^\HH$, hence the conclusion.

\end{proof}

\section{A remarkable Selmer group}
\label{Sec:Selm}

Let $M$ be a finite Galois module, suppose we have the Galois modules $I_0$, $I_1$, $I_2$ and the morphisms of $G$-modules $d_0$ and $d_1$ obtained as in section \ref{Sec:resolution}, such that the sequence $$ 0 \rightarrow M \rightarrow I_0 \xrightarrow{d_0} I_1 \xrightarrow{d_1} I_2 $$ is exact.
By proposition \ref{coh_grp}, we have $$H^1(\G,M) = \frac{\Ker (d_1^G \colon I_1^G \rightarrow I_2^G ) }{\im (d_0^G \colon I_0^G \rightarrow I_1^G)}.$$ where for all $i \in \{0, 1, 2\}$, $I_i^G$ is of the form $\bigoplus_j L_{i,j}^\times$ and the $L_{i,j}$ are intermediary fields between $K$ and $\overline{K}$.

In the rest of the article, we will denote by $L_i$ the étale algebra $\prod_j L_{i,j}$. We will allow ourself to extend to étale algebras the notions of class groups and $S$-unit groups.

\begin{df} \label{df:H1S}

Let $S$ be a set of prime numbers. Let us define the group $H^1_S(\G, M) := \frac{\Ker( d_1^G \colon \bigoplus_j \Z_{L_{1,j},S}^\times \rightarrow \bigoplus_j \Z_{L_{2,j},S}^\times) }{\im ( d_0^G \colon \bigoplus_j \Z_{L_{0,j},S}^\times \rightarrow \bigoplus_j \Z_{L_{1,j},S}^\times)}$.

\end{df}

By proposition \ref{im_S_unit}, the images of $S$-units by $d_1^G$ and $d_0^G$ are $S$-units, so the group $H^1_S(\G, M)$ is well defined.

When the context is clear, we will also allow ourself to write $H^1$ and $H^1_S$ instead of $H^1(\G,M)$ and $H^1_S(\G,M)$.

The goal of this section will be to prove that $H^1_S(\G, M)$ is a Selmer group.
We will use the following notations:
\begin{itemize}
    \item $Z^1 := \Ker (d_1^G \colon I_1^G \rightarrow I_2^G) $
    \item $B^1 := \im (d_0^G \colon I_0^G \rightarrow I_1^G)$
    \item $Z^1_S := \Ker( d_1^G \colon \bigoplus_j \Z_{L_{1,j},S}^\times \rightarrow \bigoplus_j \Z_{L_{2,j},S}^\times)$
    \item $B^1_S := \im ( d_0^G \colon \bigoplus_j \Z_{L_{0,j},S}^\times \rightarrow \bigoplus_j \Z_{L_{1,j},S}^\times)$ .
\end{itemize}

\begin{pro}
We have an injection $H^1_S(\G, M) \hookrightarrow H^1(\G, M)$ 
\end{pro}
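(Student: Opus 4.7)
The map is the natural one sending $[z] \mapsto [z]$ for $z \in Z^1_S \subseteq Z^1$, which is well-defined because $B^1_S \subseteq B^1$ directly from the definitions. For injectivity, the task reduces to showing $Z^1_S \cap B^1 = B^1_S$: given $z \in Z^1_S$ that happens to be written as $z = d_0^G(x)$ for some $x \in I_0^G = \bigoplus_j L_{0,j}^\times$ (with $x$ not a priori an $S$-unit), one must produce an $S$-unit preimage.

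The plan is to construct a left-inverse of $d_0$ up to a scalar by invoking proposition \ref{pseudo_inv} on the dual map $d_0^* \colon P_1 \to P_0$. First I would verify that $d_0^*$ has image of finite index in $P_0$: by exactness of $P_1 \xrightarrow{d_0^*} P_0 \xrightarrow{s} M^* \to 0$, the quotient $P_0 / \im d_0^*$ is isomorphic to $M^*$, which is finite since $M$ is finite and $M^* = \Hom_\Z(M, \K)$ identifies with $\Hom(M, \mu_n(\overline{K}))$, where $n$ is the exponent of $M$. Proposition \ref{pseudo_inv} then yields a $\Z[G]$-morphism $\Psi \colon P_0 \to P_1$ with $d_0^* \circ \Psi = k \cdot \id_{P_0}$ for some positive integer $k$ dividing $|G|^2$.

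Dualizing gives $\Psi^* \colon I_1 \to I_0$ satisfying $\Psi^* \circ d_0 = k \cdot \id_{I_0}$, so on $G$-invariants $\Psi^{*,G}(z) = \Psi^{*,G}(d_0^G(x)) = k \cdot x$, which in the multiplicative notation for $\bigoplus_j L_{0,j}^\times$ reads $\Psi^{*,G}(z) = x^k$. Since $\Psi$ is a $\Z[G]$-morphism between permutation modules it is a $\Z$-linear combination of Hecke operators via the identification underlying definition \ref{def_coset_Hecke}, and this property is preserved by dualization, so $\Psi^{*,G}$ is again a sum of Hecke operators between sums of unit groups. Proposition \ref{im_S_unit} then forces $\Psi^{*,G}(z) = x^k \in \bigoplus_j \Z_{L_{0,j},S}^\times$. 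Comparing valuations $k\, v_\mathfrak{p}(x_j) = v_\mathfrak{p}(x_j^k) = 0$ at each prime $\mathfrak{p}$ of $L_{0,j}$ not above $S$ then forces $v_\mathfrak{p}(x_j) = 0$, so $x$ itself is an $S$-unit, and $z = d_0^G(x) \in B^1_S$ as required.

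The principal subtlety I expect to address carefully is the claim that $\Psi$, as provided by proposition \ref{pseudo_inv}, and consequently $\Psi^{*,G}$, is really realized by Hecke operators, so that proposition \ref{im_S_unit} applies. This rests on the identification $\Hom_{\Z[G]}(\Z[G/J], \Z[G/H]) \cong \Z[J \backslash G / H]$ underlying definition \ref{def_coset_Hecke}, together with the fact that the dual of the Hecke operator associated to a double coset $HgJ$ corresponds to the operator associated to $Jg^{-1}H$, so the Hecke structure is carried through the duality used to pass from $P_i$ to $I_i$.
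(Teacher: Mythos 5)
Your proof is correct and follows essentially the same route as the paper: reduce injectivity to showing that an $S$-unit in $B^1$ has an $S$-unit preimage, apply Proposition \ref{pseudo_inv} to obtain a one-sided inverse of $d_0$ up to multiplication by $k \mid |G|^2$, use Proposition \ref{im_S_unit} to see that $x^k$ is an $S$-unit, and conclude by saturation. You are in fact somewhat more careful than the paper, which applies Proposition \ref{pseudo_inv} directly to $f\colon I_1 \to I_0$ even though that proposition is stated for permutation modules; your explicit detour through $d_0^*\colon P_1 \to P_0$, the finite-index check via $P_0/\im d_0^* \cong M^*$, and the observation that the Hecke-operator structure survives dualization are exactly the details needed to make that step rigorous.
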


\begin{proof}

We have trivially $Z^1_S \subset Z^1$. So in order to prove the proposition, it is enough to show that $B^1 \cap \Z_{L_1, S}^\times \subset B^1_S $.
In other words, we need to show that if an element $y$ in the image of $d_0^G$ is an $S$-unit, then there exists an $S$-unit $x$ in $I_0^G = L_0^\times$ such that $d_0^G(x) = y$.\

If we take the tensor product of the exact sequence (\ref{suite_ex_2}) with $\Q$, we obtain 
$$0 \rightarrow I_0 \otimes \Q \xrightarrow{d_0} I_1 \otimes \Q \xrightarrow{d_1} I_2 \otimes \Q$$
because $M$ is finite, so that $M \otimes \Q = 0$.

Then, by proposition \ref{pseudo_inv}, there exists a surjective morphism of $G$-modules $f \colon I_1 \rightarrow I_0$ such that $f \circ d_0 = k \cdot \id$, with $k$ dividing $|G|^2$.

Now, let $y$ be an element of $B^1 \cap \Z_{L_1, S}^\times$. Since $x$ is in $B^1$, there exists $x \in L_0^\times$ such that $d_0^G(x) = y$. So $s \circ d_0^G (x) = k \cdot x$ (or $x^k$ in multiplicative notation). But $d_0^G(x) = y$ is an $S$-unit, so its image by $f$ is also an $S$-unit by \ref{im_S_unit}.
Hence $k \cdot x$ is an $S$-unit. And since $\Z_{S, L_0}^\times$ is saturated as a subgroup of $L_0^\times$, that means $x$ is also an $S$-unit.

So $y$ is the image of an $S$-unit by $d_0^G$, ie $y \in B_S^1$. Hence $B^1_S \subset B^1 \cap \Z_{L_1, S}^\times$.

\end{proof}

\begin{df} \label{df:notations}

In the rest of the section, if $v$ is a finite place of $K$, we will use the following notations:
\begin{itemize}
    \item $Z^1_{\units, v} = \Ker(\Z_{L_1,v}^\times \xrightarrow{d_1} \Z_{L_2,v}^\times)$,
    \item $B^1_{\units, v} = \im(\Z_{L_0,v}^\times \xrightarrow{d_0} \Z_{L_1,v}^\times)$,
    \item$H^1_{\units, v} = \frac{Z^1_{\units, v}}{B^1_{\units, v}}$,
    \item $K_v^{\ur}$ the largest unramified extension of $K_v$, and $I_{K_v} = \Gal(\overline{K_v}/K_v^{\ur})$ the inertia group, and $\G_{K_v} = \Gal(\overline{K_v}/K)$.
    \item $Z^1_{\ram, v} = \Ker((L_0 \otimes K_v^{\ur})^\times \xrightarrow{d_1} (L_1 \otimes K_v^{\ur})^\times)$,
    \item $B^1_{\ram, v} = \im((L_1 \otimes K_v^{\ur})^\times \xrightarrow{d_0} (L_2 \otimes K_v^{\ur})^\times)$,
    \item $H^1_{\ram,v} = \frac{Z^1_{\ram, v}}{B^1_{\ram, v}}$.

\end{itemize}

When the context is clear, we will allow ourself to write $H^1_{\ram}$, $Z^1_{\ram}$ and $B^1_{\ram}$.

For every place $v$ of $K$, we also denote:
\begin{itemize}
    \item $Z^1_v = \Ker(L_{1,v}^\times \xrightarrow{d_{1,v}} L_{2,v}^\times)$
    \item $B^1_v = \im(L_{0,v}^\times \xrightarrow{d_{0,v}} L_{1,v}^\times)$
    \item $H^1_v = \frac{Z^1_v}{B^1_v}$.
\end{itemize}

where $d_{0,v}$ and $d_{1,v}$ are defined respectively by the two 
following commutative diagrams

\begin{tikzpicture}[->,>=stealth',shorten >=2pt,auto,semithick]
    \node (A) {$L_0^\times$};
    \node (B) [right of=A, xshift=2cm] {$L_1^\times$};
    \node (C) [below of=A,, yshift = -1cm] {$L_{0,v}^\times$};
    \node (D) [below of=B, yshift = -1cm] {$L_{1,v}^\times$};
    \path (A) edge [] node [above] {$d_0$} (B);
    \path (A) edge [] (C);
    \path (B) edge [] (D);
    \path (C) edge [] node [below] {$d_{0,v}$} (D);

    \node (E) [right of=B, xshift= 1.5cm] {$L_1^\times$};
    \node (F) [right of=E, xshift=2cm] {$L_2^\times$};
    \node (G) [below of=E, yshift = -1cm] {$L_{1,v}^\times$};
    \node (H) [below of=F, yshift = -1cm] {$L_{2,v}^\times$};
    \path (E) edge [] node [above] {$d_1$} (F);
    \path (E) edge [] (G);
    \path (F) edge [] (H);
    \path (G) edge [] node [below] {$d_{1,v}$} (H);

    \node(I) [left of = A, xshift = 0.3cm, yshift = -1cm] {$(\star)$};
    \node(J) [left of = E, xshift = 0.3cm, yshift = -1cm] {$(\star \star)$};
\end{tikzpicture}

\end{df}

\begin{pro}
    With the notations of definition \ref{df:notations}, we have $H^1_v = H^1(\G_{K_v}, M)$ and $H^1_{\ram} = H^1(I_{K_v}, M)$.
\end{pro}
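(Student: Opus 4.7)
The plan is to adapt the proof of proposition \ref{coh_grp} almost verbatim, replacing the global Galois group $\G$ by the local group $\G_{K_v}$ for the first identity and by the inertia subgroup $I_{K_v}$ for the second. In both cases, the two short exact sequences
$$0 \to M \to I_0 \xrightarrow{d_0} J \to 0 \quad \text{and} \quad 0 \to J \to I_1 \xrightarrow{d_1} I_2,$$
with $J = \im(d_0) \subset I_1$, remain exact, so everything reduces to (i) identifying invariants of $I_i$ under the relevant subgroup, and (ii) showing that $H^1$ of $I_0$ under that subgroup vanishes.

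For $H^1_v = H^1(\G_{K_v}, M)$: taking $\G_{K_v}$-cohomology of the first sequence yields
$$0 \to M^{\G_{K_v}} \to I_0^{\G_{K_v}} \to J^{\G_{K_v}} \to H^1(\G_{K_v}, M) \to H^1(\G_{K_v}, I_0),$$
and from the second sequence, $J^{\G_{K_v}} = \Ker(d_{1,v})$. I would identify $I_i^{\G_{K_v}}$ with $L_{i,v}^\times$ by restricting the induced module $I_i = \bigoplus_j \Ind_{\G/G_{L_{i,j}}} \overline{K}^\times$ to $\G_{K_v}$ via the Mackey/double coset decomposition $\G_{K_v} \backslash \G / G_{L_{i,j}}$, whose cosets correspond exactly to the places $w$ of $L_{i,j}$ above $v$, so $I_i|_{\G_{K_v}} = \bigoplus_j \bigoplus_{w \mid v} \Ind_{\G_{K_v}/G_{(L_{i,j})_w}} \overline{K_v}^\times$ and its invariants give $L_i \otimes_K K_v$ minus zero, i.e.\ $L_{i,v}^\times$. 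The vanishing $H^1(\G_{K_v}, I_0) = 0$ then follows from Shapiro's lemma summand by summand and Hilbert 90 applied to each completion. Combining, $H^1(\G_{K_v}, M) = \Ker(d_{1,v})/\im(d_{0,v}) = H^1_v$.

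For $H^1_{\ram} = H^1(I_{K_v}, M)$: the same long exact sequence argument with $I_{K_v}$ in place of $\G_{K_v}$ works. The identification $I_i^{I_{K_v}} = (L_i \otimes_K K_v^{\ur})^\times$ follows because for each place $w$ of $L_{i,j}$ above $v$, the $I_{K_v}$-fixed subfield of the completion $\overline{(L_{i,j})_w}$ is its maximal unramified subextension of $K_v^{\ur}$, and these assemble into $L_i \otimes_K K_v^{\ur}$ as an étale algebra. The vanishing $H^1(I_{K_v}, I_0) = 0$ again comes from Shapiro together with Hilbert 90 over $K_v^{\ur}$, since $I_{K_v} = \Gal(\overline{K_v}/K_v^{\ur})$ is itself the absolute Galois group of the field $K_v^{\ur}$.

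The one non-formal step is the identification of $I_i^{\G_{K_v}}$ and $I_i^{I_{K_v}}$ with the claimed étale algebras of $S$-units; the rest is a mechanical repetition of the argument already used for the global group $\G$. This step amounts to the well-known compatibility between the decomposition of a restricted induced representation via double cosets and the decomposition $L_i \otimes_K F = \prod_{w \mid v}(L_i)_w$ of an étale algebra under base change to a local (or maximal unramified) completion $F$.
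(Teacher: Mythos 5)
Your proposal is correct and follows essentially the same route as the paper: the paper's own proof simply observes that the resolution $0 \to M \to I_0 \to I_1 \to I_2$ remains exact as a sequence of $\G_{K_v}$-modules (resp.\ $I_{K_v}$-modules) and that $I_i^{\G_{K_v}} = L_{i,v}^\times$ and $I_i^{I_{K_v}} = (L_i \otimes K_v^{\ur})^\times$, then invokes the argument of Proposition \ref{coh_grp}. Your version merely makes explicit the details the paper leaves implicit (the Mackey double-coset decomposition identifying the invariants, and Shapiro plus Hilbert 90 over $K_v$ and over $K_v^{\ur}$ for the vanishing of $H^1$ of $I_0$), which is a faithful elaboration rather than a different argument.
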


\begin{proof}
    We can do the same construction as in section \ref{Sec:resolution}, with $K_v$ (respectively $K^{\ur}_v$) instead of $K$. The same resolution $$ 0 \rightarrow M \rightarrow I_0 \xrightarrow{d_0} I_1 \xrightarrow{d_1} I_2 $$ also works in these cases, since the $I_i$ are also both $\G_{K_v}$-modules and $I_{K_v}$ modules. Moreover, for all $i$, we have $I_i^{\G_{K_v}} = L_{i,v}^\times$ and $I_i^{I_{K_v}} = (L_i \otimes K_v^{\ur})^\times$. Hence the conclusion.
\end{proof}

Note that, by proposition \ref{pro:res_compatible}, the map $\Res_v: H^1(\G_K, M) \rightarrow H_{1,v}$ is the natural restriction $\frac{Z_1}{B_1} \rightarrow \frac{Z_{1,v}}{B_{1,v}}$.

\begin{lm} \label{incl}
Let $v \notin S$ be a place of $K$, then
we have $$\Res_v( H^1_S(\G,M) ) \subseteq H^1_{\units,v}.$$
\end{lm}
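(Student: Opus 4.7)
The plan is to exploit the elementary observation that for $v \notin S$, the inclusion $\Z_{F,S}^\times \hookrightarrow F^\times$ composed with localization at $v$ factors through the local units $\Z_{F,v}^\times$; by definition an $S$-unit has trivial valuation at every finite place outside $S$. Applied componentwise this gives, for each $i$, a morphism $\bigoplus_j \Z_{L_{i,j},S}^\times \to \Z_{L_i,v}^\times$.

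Given a class $\overline{x} \in H^1_S(\G,M)$ represented by some $x \in Z^1_S$, I would localize at $v$ to obtain $x_v \in \Z_{L_1,v}^\times$. The commutativity of diagram $(\star\star)$ in Definition \ref{df:notations}, together with $d_1^G(x) = 1$, then forces $d_{1,v}(x_v) = 1$, so $x_v$ lies in $Z^1_{\units,v}$.

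For well-definedness modulo $B^1_{\units,v}$, I would take a second representative $x'$ of $\overline{x}$ and write $x(x')^{-1} = d_0^G(y)$ with $y \in \bigoplus_j \Z_{L_{0,j},S}^\times$. Invoking the same observation, $y_v \in \Z_{L_0,v}^\times$, and commutativity of diagram $(\star)$ yields $(x(x')^{-1})_v = d_{0,v}(y_v) \in B^1_{\units,v}$. Combining both steps, $\Res_v(\overline{x})$ admits a representative in $Z^1_{\units,v}$ modulo $B^1_{\units,v}$, hence lies in $H^1_{\units,v}$.

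I expect no serious obstacle here: this is a formal consequence of the compatibility of the Hecke-operator resolution with localization, already encoded in the commutative diagrams of Definition \ref{df:notations}. The only ingredient beyond diagram chasing is the local analogue of Proposition \ref{im_S_unit}, namely that Hecke operators preserve local integrality at places not in $S$; this is implicit in the very definition of $Z^1_{\units,v}$ and $B^1_{\units,v}$, and follows by applying Definition \ref{def_coset_Hecke} to the completions $L_i \otimes_K K_v$.
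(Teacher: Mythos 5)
Your argument is correct and follows essentially the same route as the paper: localize a representative $x \in Z^1_S$ at $v \notin S$ to land in $\Z_{L_1,v}^\times$, use commutativity of $(\star\star)$ to see $x_v \in Z^1_{\units,v}$, and check independence of the representative via $(\star)$ applied to an element of $B^1_S$. Your write-up is in fact slightly more detailed than the paper's on the well-definedness step, which the paper dismisses as ``easy to check.''
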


\begin{proof}

Let $v$ be a place not in $S$.
Let $\overline{x}$ be a class in $H_S^1$, $x \in Z_S^1$ a representative of $\overline{x}$, and $x_v$ the localisation of $x$ at $v$. Since $v \notin S$, we have $x_v \in \Z_{L_1, v}^\times$. And since the diagram $(\star \star)$ commutes, we have $x_v \in Z^1_v$. So $x_v \in \Ker(\Z_{L_1, v}^\times \xrightarrow{d_{1,v}} L_{2,v}^\times)$. And if $x_2 = x \cdot b$ is another representative of $\overline{x}$, with $b \in B^1_S$, then it is easy to check that $b_v \in \im(\Z_{L_0,v}^\times \rightarrow \Z_{L_1,v}^\times)$. Hence $\Res_v(\overline{x}) \in H_{\units, v}$.

\end{proof}

\begin{pro} \label{pro:selmer1}
If $S$ is a set of primes such that the class group $\Cl(L_0)$ is spanned by ideals above all primes in $S$, then
we have $$H^1_S = \{ x \in H^1 \mid \forall v \notin S, \Res_v (x) \in H^1_{\units,v} \}.$$

\end{pro}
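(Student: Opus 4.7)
The plan is to apply lemma \ref{incl} for the inclusion $H^1_S \subseteq \{x : \forall v \notin S,\ \Res_v(x) \in H^1_{\units, v}\}$ and to spend the argument on the reverse inclusion. Given $x$ in the right-hand side, pick any representative $y \in Z^1 \subseteq L_1^\times$; the goal is to multiply $y$ by a suitable element of $B^1$ so that the result lands in $Z^1_S$.

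First I would unpack the local hypothesis: for each $v \notin S$, the condition $\Res_v(x) \in H^1_{\units, v}$ provides $c_v \in L_{0,v}^\times$ with $y_v \cdot d_{0,v}(c_v)^{-1} \in \Z_{L_1, v}^\times$, equivalently $\val_\mathfrak{P}(y) = \val_\mathfrak{P}(d_0(c_v))$ at every prime $\mathfrak{P}$ of $L_1$ above $v$. The key technical input is that $d_0$, being a sum of Hecke operators of the form $x \mapsto \prod_i \sigma_i(x)$, descends to a map of divisor groups $d_0^{\mathrm{div}} \colon \mathrm{Div}(L_0) \to \mathrm{Div}(L_1)$ which preserves the rational prime below (this is the divisorial refinement underlying proposition \ref{im_S_unit}: indeed $\val_\mathfrak{P}(d_0(z)) = \sum_i \val_{\sigma_i^{-1}(\mathfrak{P})}(z)$). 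The family $(c_v)_{v \notin S}$ then assembles into a single fractional ideal $\mathfrak{a}$ of $L_0$ supported outside $S$, for which $d_0^{\mathrm{div}}(\mathfrak{a})$ agrees with the part of the divisor of $y$ supported outside $S$.

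Next I would invoke the hypothesis on $\Cl(L_0)$: since the class group is generated by classes of primes above $S$, there is $c \in L_0^\times$ such that $\mathfrak{a}$ differs from the divisor of $c$ only at primes above $S$. Setting $y' := y \cdot d_0(c)^{-1}$, the element $y'$ represents the same class as $y$ in $H^1$ (because $d_0(c) \in B^1$), still satisfies $d_1^G(y') = 1$ since $d_1 \circ d_0 = 0$, and now has trivial divisor outside $S$, so $y' \in \Z_{L_1, S}^\times \cap \ker d_1^G = Z^1_S$. Its class in $H^1_S$ maps to $x$ under the inclusion $H^1_S \hookrightarrow H^1$, completing the argument.

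The main obstacle I anticipate is making precise the existence of the divisor-level map $d_0^{\mathrm{div}}$ and verifying that the local valuation data from the $c_v$ really patch into a global fractional ideal whose image under $d_0^{\mathrm{div}}$ is the non-$S$ part of $\mathrm{div}(y)$ on the nose; this is where the explicit Hecke operator description of $d_0$ is used essentially. Once this bookkeeping is settled, what remains is a standard principal-ideal-theorem trick, controlled directly by the class group assumption on $L_0$.
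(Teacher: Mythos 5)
Your proposal is correct and follows essentially the same route as the paper: the forward inclusion via Lemma \ref{incl}, and the reverse inclusion by converting the local elements $c_v$ into a fractional ideal of $L_0$ and invoking the hypothesis that primes above $S$ generate $\Cl(L_0)$ to produce a global $c$ with $d_0(c)^{-1}$ correcting the representative into an $S$-unit. Your write-up is in fact somewhat more explicit than the paper's (which leaves the divisor-level map induced by $d_0$ and the final adjustment $y \mapsto y \cdot d_0(c)^{-1}$ implicit), but the underlying argument is identical.
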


\begin{proof}

By lemma \ref{incl}, we already have the inclusion $H^1_S \subseteq \{ x \in H^1 \mid \forall v \notin S, \Res_v (x) \in H^1_{\units,v} \}$. 

Now, let $\overline{x}$ be a class in $H^1(\G, M)$ such that for all place $v \notin S$, we have $\Res_v(x) \in H^1_{\units, v}$.
By definition, for all $v$, there exists $z_v$ in $L_{0,v}^\times$ such that $\Res_v(x) \cdot d_{0,v}(z_v) \in \Z_{L_1, v}^\times$.

We want to show that there exists $z \in L_0^\times$ such that for all $v \notin S$, $z \cdot z_v^{-1} \in \Z_{L_0, v}^\times$. 
This problem is equivalent to taking a fractional ideal $\mathfrak{a}$ of $L_0$, and deciding whether there exists $\mathfrak{p}$ a product of prime ideals in $S$ such that $\mathfrak{a} \mathfrak{p}$ is principal. But since $S$ spans the class group of~$L_0$, we know it is possible.

\end{proof}

\begin{pro} \label{pro:selmer2}
For every place $v$ of $K$ such that $v$ does not divide $|M|$, we have $H^1_{\units, v} = \Ker (\Res \colon H^1 \rightarrow H^1_{\ram})$.
\end{pro}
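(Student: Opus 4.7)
The plan is to establish both inclusions of the equality $H^1_{\units,v} = \Ker(\Res)$, using the identifications $H^1_v = H^1(\G_{K_v}, M)$ and $H^1_{\ram,v} = H^1(I_{K_v}, M)$ from the preceding proposition, together with the pseudo-inverse of Proposition~\ref{pseudo_inv}; the hypothesis $v \nmid |M|$ will enter only in the first direction, through Hensel's lemma.

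For $H^1_{\units,v} \subseteq \Ker(\Res)$, I would start from $z \in Z^1_{\units,v}$, use exactness of the resolution over $\overline{K}$ to produce $y \in I_0$ with $d_0(y) = z$ (defined modulo $M = \Ker(d_0)$), and observe that the class of $z$ in $H^1_{\ram,v}$ is represented by the cocycle $c \colon \sigma \mapsto \sigma(y)/y$ on $I_v$ with values in $M$. Applying the pseudo-inverse $f \colon I_1 \to I_0$ gives $y^k = f(z)$ with $k \mid |G|^2$; since Hecke operators preserve $v$-units (local version of Proposition~\ref{im_S_unit}) and $z$ is a $v$-unit, $y^k$ is a $v$-unit, hence so is $y$ by torsion-freeness of the valuation. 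Then $\sigma(y) \equiv y \pmod{\mathfrak{m}}$ for every $\sigma \in I_v$ (inertia acts trivially on residues modulo the maximal ideal $\mathfrak{m}$ of the valuation ring of $\overline{K}$ at a fixed place above $v$), so $c(\sigma) \in M \cap (1+\mathfrak{m})$. Invoking $v \nmid |M|$ together with $M \subseteq \mu_{|M|}$, Hensel's lemma makes this intersection trivial, so $c = 0$ and $[z] = 0$ in $H^1_{\ram,v}$.

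For the reverse inclusion $\Ker(\Res) \subseteq H^1_{\units,v}$, I would take $z \in Z^1_v$ with $[z] = 0$ in $H^1_{\ram,v}$, giving $y \in (L_0 \otimes K_v^{\ur})^\times$ with $d_0(y) = z$. The map $d_0$ induces a $G_v/I_v$-equivariant map $d_0^{\val} \colon \Z^{m_0} \to \Z^{m_1}$ on the valuation quotients of $(L_i \otimes K_v^{\ur})^\times$, and the pseudo-inverse induces a compatible $f^{\val}$ with $f^{\val} \circ d_0^{\val} = k \cdot \mathrm{id}$. Since $z \in L_{1,v}^\times$, its valuation is $G_v/I_v$-fixed, and equivariance of $f^{\val}$ combined with torsion-freeness of $\Z^{m_0}$ forces $\val(y)$ to be $G_v/I_v$-fixed too, hence to lie in the valuation group $\Z^{n_0}$ of $L_{0,v}^\times$. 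Choosing $w \in L_{0,v}^\times$ with $\val(w) = \val(y)$, the element $z' := z/d_0(w)$ lies in $\Z_{L_1,v}^\times \cap Z^1_v = Z^1_{\units,v}$ and represents the same class as $z$ in $H^1_v$. The main obstacle in this approach will be setting up the valuation maps compatibly with Hecke operators and the $G_v/I_v$-action on $(L_i \otimes K_v^{\ur})^\times$ — in particular verifying that the pseudo-inverse descends to valuations so that the equivariance-plus-torsion-freeness trick succeeds; the Hensel step in the first direction is routine once $y$ is shown to be a $v$-unit.
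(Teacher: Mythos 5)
Your proof is correct, and its two halves relate differently to the paper's own argument. The second inclusion ($\Ker(\Res) \subseteq H^1_{\units,v}$) is essentially the paper's proof: both of you reduce to showing that $\val(y)$ is $\Gal(K_v^{\ur}/K_v)$-invariant and hence realized by the valuation of some $w \in L_{0,v}^\times$; you get the invariance from the pseudo-inverse on valuation groups (since $k\cdot\val(y)=f^{\val}(\val(z))$ is fixed and $\Z^{m_0}$ is torsion-free), whereas the paper uses injectivity of $d_0$ on valuation groups (its kernel $M$ is torsion) --- a cosmetic difference. The first inclusion is where you genuinely diverge. The paper stays at the level of $(L_i\otimes K_v^{\ur})^\times$: it raises to the $N$-th power, uses $N$-divisibility of $(\O_{L_0}\otimes\O_{K_v^{\ur}})^\times$ to extract an $N$-th root, and then must dispose of an ambiguity by an $N$-th root of unity via a separate mod-$N$ exactness argument on the permutation resolution. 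You instead lift $z$ all the way to $y\in I_0$ over $\overline{K_v}$, identify $\Res([z])$ with the explicit inertia cocycle $\sigma\mapsto\sigma(y)y^{-1}$, use the pseudo-inverse to see that $y$ is a $v$-unit, and kill the cocycle because inertia acts trivially on residues and reduction is injective on $\mu_{|M|}$ when $v\nmid|M|$. This is shorter and avoids the paper's root-of-unity detour entirely. The price is two points you should spell out: (i) that under the identification $H^1_{\ram,v}\cong H^1(I_{K_v},M)$ the image of $z$ is indeed the connecting cocycle $\sigma\mapsto\sigma(y)y^{-1}$ (true, by the snake-lemma description behind Propositions \ref{coh_grp} and \ref{pro:res_compatible}); and (ii) the ``local version'' of Proposition \ref{im_S_unit}, namely that a map given by Hecke operators preserves $v$-adic units --- this is not literally stated in the paper, but it follows from the same structure of Hecke operators (inclusions, conjugations, norms) that the paper itself relies on in the top row of its commutative diagram. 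Neither point is a gap, only something to make explicit.
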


\begin{proof}
First let us show the inclusion $H^1_{\units,v} \subseteq \Ker (\Res \colon H^1 \rightarrow H^1_{\ram})$.

We have the following diagram:

\[ 
\xymatrix@C4.5pc{
(\O_{L_0} \otimes \O_{K_v^{\ur}}) ^\times \ar[r]^{d_0} \ar@{^{(}->}[d]^{i} & (\O_{L_1} \otimes \O_{K_v^{\ur}}) ^\times \ar[r]^{d_1} \ar@{^{(}->}[d]^{i} &  (\O_{L_2} \otimes \O_{K_v^{\ur}}) ^\times \ar@{^{(}->}[d]^{i}\\ 
(L_0 \otimes K_v^{\ur}) ^\times \ar[r]^{d_0} \ar@{->>}[d]^{\val} & (L_1 \otimes K_v^{\ur}) ^\times \ar[r]^{d_1} \ar@{->>}[d]^{\val} &  (L_2 \otimes K_v^{\ur}) ^\times \ar@{->>}[d]^{\val}\\
\Z^{\Hom(L_0, K_v^{\ur})} \ar@{^{(}->}[r]^{d_0} & \Z^{\Hom(L_1, K_v^{\ur})} \ar[r]^{d_1} & \Z^{\Hom(L_2, K_v^{\ur})}
}
\] \label{diag}

where the three vertical sequences are exact, and where $\val$ denotes the valuation morphisms. What's more, the morphism $d_0 \colon \Z^{\Hom(L_0, K_v^{\ur})} \rightarrow \Z^{\Hom(L_1, K_v^{\ur})}$ is injective because the kernel of $d_0 \colon (L_0 \otimes K_v^{\ur}) ^\times \rightarrow (L_1 \otimes K_v^{\ur}) ^\times$ is torsion, so its image under $\val$ is $0$.\\

Let $x \in Z^1_{\units} \subset \O_{L_1}^\times$. That is to say $d_1(x) = 1 \in \O_{L_2}^\times$.
We want to show that $\Res(x) = x \otimes 1 \in (L_1 \otimes K_v^{\ur} )^\times$ is in $B^1_{\ram} = d_0( (L_0 \otimes K_v^{\ur})^\times)$.

Let $N$ be an integer such that the module $M$ is annihilated by $N$, and such that $N$ does not divide the characteristic of the residue field of $\O_K$. Then $H^1_{\ram}$ is $N$-torsion.

So there exists $y \in (L_0 \otimes K_v^{\ur})^\times$ such that $\Res(x) ^N = d_0(y)$. 

Since $x \in \O_{L_1}^\times$, then $\val( \Res(x) ) = 0$, so $\val (y) = 0$, so $y \in (\O_{L_0}^\times \otimes \O_{K_v^{\ur}}^\times) ^\times$. And $(\O_{L_0}^\times \otimes \O_{K_v^{\ur}}^\times) ^\times$ is $N$-divisible, so there exists $z \in (\O_{L_0}^\times \otimes \O_{K_v^{\ur}}^\times) ^\times$ such that $z^N = y$.

Hence $d_0(z)^N = d_0(y) = x^N$. This proves that $d_0(z) = \zeta_N x$, with $\zeta_N$ a $N$-th root of unity.\\
Now let us prove that for the $N$-th roots of unity, $\im(d_0) = \Ker(d_1)$, which would imply the conclusion.\\

With the notation of section \ref{Sec:resolution}, we have an exact sequence $$ P_2 \xrightarrow{d_1^*} P_1 \xrightarrow{d_0^*} P_0 \xrightarrow{s} M^* \rightarrow 0. $$

Consider the modules $P_2' = \im(d_1^*)$ and $P_0' = \im(d_0^*)$. Then, by definition, we have the short exact sequence $$0 \rightarrow P_2' \rightarrow P_1 \rightarrow P_0' \rightarrow 0.$$

By properties of Tor functors (see for example \cite[chapter VI] {Hom_algebra}), and because the modules $P_1, P_0', P_2'$ are $N$-torsion free, we have the short exact sequence $$0 \rightarrow P_2'/N \rightarrow P_1/N \rightarrow P_0'/N \rightarrow 0.$$
By taking the dual, we then get precisely that $\im(d_0) = \Ker(d_1)$ for the $N$-th roots of unity, because for every $i$, we have $(P_i/N)^* = I_i [N]$, and $I_i[N]$ is the set of $N$-th roots of unity of $\overline{L_i}$.\\

Now let us show the second inclusion: $H^1_{\units, v} \supseteq \Ker (\Res \colon H^1 \rightarrow H^1_{\ram})$.

Let $x$ be an element of $\Ker (\Res \colon H^1 \rightarrow H^1_{\ram})$, that is to say an element of $L_1^\times$ such that $\Res(x) \in B^1_{\ram}$. We want to show that there exists $z \in B^1$ such that $x \cdot z^{-1} \in \O_{L_1}^\times$.

As $\Res(x)$ is in $B^1_{\ram}$, there exists $y \in (L_0 \otimes K_v^{\ur})^\times$ such that $d_0( \val(y) ) = \val(\Res(x))$. Besides, since $x$ is in $L_1^\times$, then $\val (\Res(x))$ is invariant by the action of $\Gal(K_v^{\ur}/K)$.

So for all $g \in \Gal(K_v^{\ur}/K)$, $g \cdot d_0(\val(y)) = d_0(\val(y) ) = d_0(g \cdot \val(y) )$. 
Since  $d_0 \colon \Z^{\Hom(L_0, K_v^{\ur})} \rightarrow \Z^{\Hom(L_1, K_v^{\ur})}$ is injective, that means $\val(y)$ is also invariant by the action of $\Gal(K_v^{\ur}/K)$.

Therefore, $\val(y)$ is in $ (\Z^{\Hom(L_0, K_v^{\ur})})^{(\Gal(K_v^{\ur}/K))}$, so there exists $z \in L_0^\times$ such that $\val(z) = \val(y)$.

And thus $\val( \Res(d_0(z) ) ) = d_0(\val(z)) = d_0(\val(y) ) = \val( \Res(x) )$, hence $\val( \Res( d_0(z) x^{-1}) ) = 0$.

So, again by injectivity, $d_0(z) = x$, hence the conclusion.

\end{proof}

\begin{tm} \label{tm:H1S_selmer}
    If all prime ideals above $S$ span $\Cl(L_0)$  and $S$ contains all primes that divide $|M|$, then $H^1_S$ is a Selmer group.  More precisely, it is the Selmer group attached to the Selmer structure where all the conditions at places outside of $S$ are unramified conditions and where there is no condition for the places in $S$.
\end{tm}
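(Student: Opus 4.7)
The plan is to combine Propositions \ref{pro:selmer1} and \ref{pro:selmer2} and then compare the result with the definition of a Selmer group given in the introduction. First I would reduce to a finite set $S$: since $\Cl(L_0)$ is finite, one can always choose $S$ finite satisfying both hypotheses (spanning $\Cl(L_0)$ and containing every rational prime dividing $|M|$). This is important so that only finitely many places of $K$ lie above primes of $S$, a condition needed to match the definition of a Selmer system.

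Next, I would apply Proposition \ref{pro:selmer1}, which gives the description
\[
H^1_S = \{ x \in H^1(\G,M) \mid \Res_v(x) \in H^1_{\units, v} \text{ for all } v \notin S \}.
\]
Since $S$ contains every prime dividing $|M|$, every place $v$ of $K$ lying outside $S$ (i.e.\ above a rational prime not in $S$) satisfies the hypothesis of Proposition \ref{pro:selmer2}. Applying that proposition at each such $v$ identifies $H^1_{\units, v}$ with $\Ker(\Res \colon H^1_v \to H^1_{\ram,v})$.

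The main conceptual step is then to recognize this kernel as the unramified local condition from the introduction. Using the identifications $H^1_v = H^1(G_v, M)$ and $H^1_{\ram,v} = H^1(I_v, M)$ established in the proposition just before Lemma \ref{incl}, together with the compatibility of the restriction map furnished by Proposition \ref{pro:res_compatible}, one gets
\[
H^1_{\units, v} \;=\; \Ker\bigl(H^1(G_v,M)\to H^1(I_v,M)\bigr) \;=\; H^1_{un}(G_v,M).
\]
So for $v \notin S$ the local condition cut out by $H^1_{\units, v}$ is precisely the unramified condition, while for $v$ above a prime in $S$ no local condition is imposed (equivalently, $L_v = H^1(G_v,M)$).

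Finally I would check that the Selmer system $\LL$ defined by $L_v = H^1_{un}(G_v,M)$ for $v \notin S$ and $L_v = H^1(G_v,M)$ for $v$ above $S$ is indeed a Selmer system: the finiteness of $S$ ensures only finitely many $L_v$ differ from the unramified condition. By the definition recalled in the introduction,
\[
\Sel_\LL = \Ker\Bigl(H^1(\G,M) \to \prod_v H^1(G_v,M)/L_v\Bigr),
\]
and an element of $H^1(\G,M)$ lies in this kernel exactly when its restriction at each $v \notin S$ is unramified, which by the two propositions above is the defining condition for $H^1_S$. I do not anticipate a real obstacle here; the only subtlety is being careful with the distinction between rational primes (elements of $S$) and places of $K$ above them, and ensuring that the finiteness condition in the definition of a Selmer system is met.
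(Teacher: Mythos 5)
Your proposal is correct and follows exactly the route the paper takes: the paper's own proof simply states that the theorem is a direct consequence of Propositions \ref{pro:selmer1} and \ref{pro:selmer2}, and you have filled in precisely the chain of identifications (via Proposition \ref{pro:res_compatible} and the identification $H^1_v = H^1(\G_{K_v},M)$, $H^1_{\ram,v} = H^1(I_{K_v},M)$) that makes that deduction work. The added care about finiteness of $S$ and the distinction between rational primes and places of $K$ is a reasonable elaboration of details the paper leaves implicit.
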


\begin{proof}

    The theorem is a direct consequence of proposition \ref{pro:selmer1} and proposition \ref{pro:selmer2}.

\end{proof}

\begin{req}
    Since every Selmer group is contained in a $H^1_S$ for some finite set of places $S$, this gives another proof that Selmer groups are finitely generated.
\end{req}

\section{Algorithm and complexity} \label{Sec:Algo}

In this section, we will explain the algorithmic method to obtain a partial resolution of a finite Galois module $M$, with permutation modules, as discussed in section \ref{Sec:resolution} (See algorithm \ref{alg_resolution}). Then, we will describe the algorithm to compute Selmer groups, (see algorithm \ref{alg:main}) and discuss its complexity (see proposition \ref{pro_complexity}). 

But first, we have to explain how to represent in bits all the mathematical objects involved.

Let $M$ be a finite Galois module, and $G$ be the image of the action $\G \rightarrow \Aut(M)$. It is a finite group, so we can represent it as a subgroup of a permutation group. We can also suppose that we have a list $[g_1, \cdots, g_r]$ of generators.

Since $M$ is a finite module, we can represent it as a list $[m_1, \cdots, m_s]$ of generators of $M$ as an abelian group, and a list of relations, as well as a list of matrices giving the actions of the generators of $G$ on the~$m_i$.  

We can represent a Selmer system $\LL$, with a set of primes, indicating the places where the local conditions are not the unramified condition, a basis of the local cohomology groups at these places and the generators of the subgroups in $\LL$.

As for the Selmer group $\Sel_{\LL}$, since it is a finitely generated group, we can represent it as a list of generators and a list of relations, or by its decomposition in cyclic factors, with the theorem of structure of finitely generated abelian groups.

\begin{algo} \label{alg_perm_surj} ~~\\

\underline{input:} A finite group $G$ and a finitely generated $G$-module $N$.\\
\noindent \underline{output:} A permutation module $P$ and a surjective morphism of $G$-modules $s \colon P \rightarrow N$.

\begin{itemize}
    \item Let $(x_1, \cdots x_r)$ be a generating sequence of elements of $N$.
    \item \label{Hx_et_f_x} For every element $x$ in $\{x_1, \cdots, x_r\}$, 
        \begin{itemize}
            \item compute $H_x = \Stab_{G}(x)$ the stabilizer of $x$ under the action of $G$.
            \item Compute $f_x \colon \Z[G/H_x] \rightarrow N, ~ 1\cdot H_x \mapsto x$.
        \end{itemize}
    \item \label{P0_et_s} Return $P = \bigoplus_{i = 1}^r \Z[G/H_{x_i}]$ and $s = \sum_{i = 1}^r f_{x_i}$.
\end{itemize}

\end{algo}

\begin{algo} \label{alg_resolution}
~~\\
\underline{input:} A finite Galois module $M$, of Galois group $\G$, and $G$ the image of the action $\G \rightarrow \Aut(M)$.

\noindent \underline{output:} Permutation modules $P_i$ and morphisms of $G$-modules $s$ and  $d_i^*$  such that the sequence $$\cdots \xrightarrow{d_2^*} P_2 \xrightarrow{d_1^*} P_1 \xrightarrow{d_0^*} P_0 \xrightarrow{s} M^* \rightarrow 0 $$ is exact.

\begin{enumerate}
    \item Compute $M^*$, take $(x_1, \cdots, x_r)$ a finite generating sequence of elements of $M^*$.
    \item Using algorithm \ref{alg_perm_surj}, compute a permutation module $P_0$ as well as a surjective morphism of $\G$-module  $s \colon P_0 \rightarrow M^*$
    \item Compute the kernel $K_0$ of $s$.
    \item Use algorithm \ref{alg_perm_surj} again, on $K_0$, to obtain $P_1$ and $d_0^*$.
    \item Repeat the same process again to obtain all the $P_i$ and the $d_i^*$.
\end{enumerate}

\end{algo}

Suppose we have a Selmer system $\LL$, and we want to compute $\Sel_{\LL}$, the Selmer group attached to $\LL$ for $M$. Using the results in part \ref{Sec:resolution} and \ref{Sec:Selm}, we deduce the following algorithm.

\begin{algo} \label{alg:main}
~~\\
\underline{input:} A finite Galois module $M$, of Galois group $\G$, and $G$ the image of the action $\G \rightarrow \Aut(M)$. A Selmer system $\LL$.

\noindent \underline{output:} The Selmer group group $\Sel_{\LL}$ 
\end{algo}

\begin{itemize}
    \item Use algorithm \ref{alg_resolution} to compute a resolution of $M$ as in section \ref{Sec:resolution}.
    \item Let $S$ be the smallest set of primes such that all conditions in $\LL$ outside of $S$ are the unramified condition and such that $S$ spans the class group $\Cl(L_0)$ and $S$ contains all the primes that divide $|M|$.
    \item Compute $H_S^1(G,M)$. 
    \item Look for $\Sel_{\LL}$ as a subgroup of the finitely generated group $H_S^1(G,M)$.
\end{itemize}

\begin{tm}
    The algorithms \ref{alg_perm_surj}, \ref{alg_resolution} and \ref{alg:main} are correct.
\end{tm}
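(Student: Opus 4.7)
The plan is to verify correctness of each algorithm in turn, with each successive proof building on the preceding.

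For Algorithm~\ref{alg_perm_surj}, the only nontrivial point is that for each generator $x$ of $N$, the assignment $1 \cdot H_x \mapsto x$ extends uniquely to a well-defined $\Z[G]$-module homomorphism $f_x \colon \Z[G/H_x] \to N$. This is a direct application of the universal property of the cyclic permutation module: $\Z[G]$-morphisms out of $\Z[G/H_x]$ correspond to $H_x$-invariant elements of $N$, and $x$ is $H_x$-invariant precisely because $H_x = \Stab_G(x)$. The sum $s = \sum_i f_{x_i}$ is then a $\Z[G]$-morphism whose image contains every $x_i$; since the $x_i$ generate $N$, surjectivity follows.

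For Algorithm~\ref{alg_resolution}, I would iterate Algorithm~\ref{alg_perm_surj}, checking by induction that at each step the input is a finitely generated $\Z[G]$-module. The ring $\Z[G]$ is Noetherian (being a finitely generated module over the Noetherian ring $\Z$), so the kernel of any surjection between finitely generated $\Z[G]$-modules is again finitely generated. This guarantees that the iteration does not break down. Exactness of the resulting sequence is automatic from the construction: at each stage $d_i^*$ is defined so that $\im(d_i^*) = K_{i-1} = \Ker(d_{i-1}^*)$.

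For Algorithm~\ref{alg:main}, I would proceed in four steps mirroring the algorithm. First, apply correctness of Algorithm~\ref{alg_resolution} to obtain the exact sequence of permutation modules, which dualizes via Lemma~\ref{lm_ex_fct} to the sequence $0 \to M \to I_0 \to I_1 \to I_2$ of Section~\ref{Sec:Selm}. Second, observe that the chosen $S$ satisfies the hypotheses of Theorem~\ref{tm:H1S_selmer} by construction, so $H^1_S(\G, M)$ is the Selmer group attached to the structure that is unramified outside $S$ and unrestricted at $S$. Third, since every condition in $\LL$ at a place $v \notin S$ is the unramified one by choice of $S$, one has the containment $\Sel_{\LL} \subseteq H^1_S(\G, M)$ together with the characterization
\[
\Sel_{\LL} = \bigl\{ x \in H^1_S(\G, M) \mid \Res_v(x) \in L_v \text{ for every } v \in S \bigr\}.
\]
The fourth step of the algorithm then cuts out $\Sel_\LL$ as the kernel of the finite product of maps $H^1_S(\G, M) \to H^1(G_v, M)/L_v$ over $v \in S$.

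The step I expect to be the most delicate is this last one: identifying $\Sel_\LL$ inside $H^1_S$ requires a concrete description of each $\Res_v$ at the level of representatives, namely $S$-units of $L_1$ modulo the image of $S$-units of $L_0$. Proposition~\ref{pro:res_compatible} asserts precisely that these maps are the natural localizations induced by $L_1^\times \to L_{1,v}^\times$, so verifying this compatibility and then translating membership in $L_v$ back to linear conditions on representatives is where most of the real content of the correctness argument lies.
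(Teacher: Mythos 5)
Your proof is correct and follows the same route as the paper, whose own proof is a single sentence declaring Algorithms \ref{alg_perm_surj} and \ref{alg_resolution} self-explanatory and deducing Algorithm \ref{alg:main} from Theorem \ref{tm:H1S_selmer}. You simply supply the details the paper leaves implicit (the universal property of $\Z[G/H_x]$, Noetherianity of $\Z[G]$ to keep the iteration going, and the containment $\Sel_\LL \subseteq H^1_S$ via Proposition \ref{pro:res_compatible}), all of which are consistent with the paper's intent.
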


\begin{proof}

The correctness of algorithms \ref{alg_perm_surj} and \ref{alg_resolution} are self explanatory, and the correctness of algorithm \ref{alg:main} is a consequence of theorem \ref{tm:H1S_selmer}.

\end{proof}

\begin{pro} \label{pro_complexity}
    If we suppose that we have an oracle that can give us the $S$-units and the class group of any number fields, and another that can compute the fixed field of a subgroup of a Galois group, then the algorithm \ref{alg:main} as a time complexity polynomial in the size of the input and in $|M|$.
\end{pro}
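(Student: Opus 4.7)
The plan is to walk through each step of Algorithm \ref{alg:main} together with the subroutines it invokes, and verify that, given the two oracles, each step runs in time polynomial in the size of the input and in $|M|$.

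First I would analyze Algorithm \ref{alg_perm_surj}: given a finitely generated $\Z[G]$-module $N$ with a generating sequence $(x_1, \dots, x_r)$, each stabilizer $\Stab_G(x_i)$ is obtained by computing the orbit of $x_i$ under a fixed generating set of $G$, which runs in time polynomial in $|G|$ and in the size of $N$; the output module $P$ has $\Z$-rank at most $r|G|$ and the surjection $s$ is specified by the images of $r$ cosets, so both have polynomial size. Iterating this three times, as in Algorithm \ref{alg_resolution}, produces the partial resolution $P_2 \to P_1 \to P_0 \to M^* \to 0$: the successive kernels are computed by Hermite normal form over $\Z$ from the matrices of $s$ and $d_0^*$, and Lemma \ref{lm_ex_fct} then yields $I_0, I_1, I_2$ together with the explicit $G$-sets of cosets indexing them. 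Throughout, $|M^*| = |M|$ keeps the initial input to the iteration polynomial in $|M|$.

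Next I would handle the main loop of Algorithm \ref{alg:main}. The set $S$ is obtained from the primes dividing $|M|$ (enumerated by trial division in time polynomial in $|M|$), the primes supporting $\LL$ (given as input), and additional primes returned by the class group oracle applied to each number field $L_{0,j} = \overline{K}^{H_{0,j}}$, where $L_{0,j}$ is itself produced by the fixed-field oracle from the explicit subgroup $H_{0,j} \le G$. The $S$-unit oracle then supplies generators and relations for each $\Z_{L_{i,j}, S}^\times$, and since $d_0^G, d_1^G$ are sums of Hecke operators they translate, via Definition \ref{def_coset_Hecke}, to explicit matrices of polynomial size on these generators. Computing $H^1_S(\G,M) = Z^1_S / B^1_S$ then reduces to a kernel and an image computation between finitely generated abelian groups, which is polynomial via Smith normal form. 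Finally, $\Sel_\LL$ is cut out inside $H^1_S$ by intersecting the kernels of the restriction maps $H^1_S \to H^1(G_v,M)/L_v$ at the finitely many places $v \in S$ where $\LL$ is not unramified; Proposition \ref{pro:res_compatible} gives the same concrete linear-algebra description of each restriction, so this step is again polynomial.

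The main obstacle I expect is controlling the bit-sizes of the matrices of $d_0^*, d_1^*$ and of the $S$-unit generators through the successive kernel and image computations; this should follow from the standard polynomial bit-complexity bounds on Hermite and Smith normal forms applied to matrices of polynomial size, but one has to check carefully that the cardinality of $S$, and hence the rank of the $S$-unit groups produced by the oracle, remains controlled by the input and by $|M|$. In particular, one must verify that the generators of $\Cl(L_0)$ returned by the oracle lie above a polynomially bounded set of rational primes, so that the resulting $S$ does not blow up under the iteration of localisations needed for the final restriction step.
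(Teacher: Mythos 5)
Your proposal is correct and follows essentially the same route as the paper's own proof: a step-by-step audit of Algorithms \ref{alg_perm_surj}, \ref{alg_resolution} and \ref{alg:main}, checking that stabilizer computations, the representation of the $P_i$ and of the Hecke operators, the $S$-unit computations, and the final kernel computation for $\Sel_\LL$ are all polynomial given the two oracles. Your closing concern about controlling the cardinality of $S$ and the bit-sizes through the normal-form computations is a fair point that the paper itself glosses over, but it does not change the structure of the argument.
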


\begin{proof}

First, let us prove that algorithm \ref{alg_resolution} has a time complexity polynomial in the size of $M$ and $G$.
    
\begin{itemize}
    \item If we have a finite $G$-module $M$ given by a list of generators $[m_1, \cdots, m_s]$ and a list of matrices $[M_1, \cdots, M_r]$, as described above, then we can represent the dual module $M^*$ by taking the inverse transpose of all the matrices, twisted by the cyclotomic character $\chi_{|M|}$. 
    
    Indeed, all elements of $M$ are $|M|$-torsion, so $\Hom_\Z(M, \K) = \Hom_\Z(M, \mu_{|M|})$ where $\mu_{|M|}$ is $\Z / |M|\Z$ as a $\G$-module where the action of $\G$ is given by the cyclotomic character $\chi_{|M|}$. So
    \[
    \Hom_\Z(M, \K) = \Hom_\Z(M, \Z / |M|\Z) \otimes \chi_{|M|},
    \]
    and the dual module $M^*$ is computed in polynomial time.

    \item We can compute the stabilizers $\Stab_G(x)$ in time polynomial in the size of $M$, using the method described in \cite[Chapter 4.1]{comp_gp_th}.

    \item With the notations of section \ref{Sec:resolution}, the $P_i$ are all free, finitely generated, $\Z$-modules, they can be represented as in \cite[section 7.4.1]{comp_gp_th}.

    If $[g_1, \cdots, g_r]$ is a list of generators of $G$, let $i$ be an integer, and let us fix $(p_{i,1}, \cdots, p_{i,d_i})$ be a $\Z$-basis of $P_i$. Then we can represent $P_i$ as a list $[\alpha_1, \cdots, \alpha_r]$ where the $\alpha_j$ are the $(d_i \times d_i)$-matrices of the actions of the $g_i$ on the basis $(p_{i,1}, \cdots, p_{i,d_i})$. So their size is still polynomial in the size of the input.

    And the morphisms of $G$-modules $d_0$ and $d_1$ can be represented as a list of co-sets, corresponding to their decompositions in Hecke operators (see definition \ref{def_coset_Hecke}).
    
\end{itemize}

Once we apply algorithm \ref{alg_resolution}, we obtain an exact sequence of the form $$ P_2 \xrightarrow{d_1} P_1 \xrightarrow{d_0} P_0 \xrightarrow{s} M^* \rightarrow 0 $$
and we represent $d_0$ and $d_1$ as a sum of cosets corresponding to Hecke operators. Then, with the notations of proposition \ref{pro:alg_étales}, we can compute the number fields $L_{i,j} = \overline{K}^{H_{i,j}}$ thanks to the oracle.

Then, assuming the oracle gives us the $S$-units of all the $L_{i,j}$, with $S$ easily accessible from the representation of the Selmer system $\LL$ and from our oracle, computing the group $H_S^1(\G,M)$ boils down to computing the actions of Hecke operators on $S$-units, which takes polynomial time (see \cite[Theorem 1.18]{clgp}).

Finally, all there is left to do is to find a basis of $\Sel_\LL$ as a subgroup of $H^1_S(G,M)$. This comes down to computing the kernel of the map $$H_S^1 ( G, M) \rightarrow \prod_{v}\frac{H^1(G_v, M)}{L_v}.$$ 


\end{proof}

\begin{req}
    To compute the fixed fields $L_{i,j} = \overline{K}^{H_{i,j}}$, one can use \cite[algorithm 1]{fixed_fields}. However, the author of the present paper was unable to find a result in the literature about the complexity of this algorithm.
\end{req}

\printbibliography[title = References]

\end{document}